\def\psfancypar#1#2{\begingroup\def\par{\endgraf\endgroup\lineskiplimit=0pt}
               \setbox2=\hbox{\large\sc #2}
               \newdimen\tmpht \tmpht \ht2 \advance\tmpht by \baselineskip
               \font\hhuge=Times-Bold at \tmpht
               \setbox1=\hbox{{\hhuge #1}}
               \count7=\tmpht \count8=\ht1
               \divide\count8 by 1000 \divide\count7 by \count8
               \tmpht=.001\tmpht\multiply\tmpht by \count7
               \font\hhuge=Times-Bold at \tmpht
               \setbox1=\hbox{{\hhuge #1}}
               \noindent
                \hangindent1.05\wd1
               \hangafter=-2 {\hskip-\hangindent
               \lower1\ht1\hbox{\raise1.0\ht2\copy1}%
                \kern-0\wd1}\copy2\lineskiplimit=-1000pt}
\newcommand{\beq}{\begin{equation}}
\newcommand{\eeq}{\end{equation}}
\newcommand{\bqa}{\begin{eqnarray}}
\newcommand{\eqa}{\end{eqnarray}}
\newcommand{\bqn}{\begin{eqnarray*}}
\newcommand{\eqn}{\end{eqnarray*}}
\newcommand{\be}{\begin{enumerate}}
\newcommand{\ee}{\end{enumerate}}
\newcommand{\bi}{\begin{itemize}}
\newcommand{\ei}{\end{itemize}}
\newcommand{\bd}{\begin{description}}
\newcommand{\ed}{\end{description}}
\newcommand{\ba}{\begin{array}}
\newcommand{\ea}{\end{array}}
\newcommand{\bde}{\begin{definition}}
\newcommand{\ede}{\end{definition}}
\newcommand{\bex}{\begin{example}}
\newcommand{\eex}{\end{example}}
\def\boxit#1{\vbox{\hrule\hbox{\vrule\kern3pt
        \vbox{\kern3pt#1\kern3pt}\kern3pt\vrule}\hrule}}
\def\reals{ { {\rm  I \kern-0.15em R }  } }
\def\complex{ {\,{{\rm C} \kern-0.50em \raise0.20ex {  |}}\, }}
\def\0bf{{\bf 0}}
\def\1bf{{\bf 1}}
\def\2bf{{\bf 2}}
\def\3bf{{\bf 3}}
\def\4bf{{\bf 4}}
\def\5bf{{\bf 5}}
\def\6bf{{\bf 6}}
\def\7bf{{\bf 7}}
\def\8bf{{\bf 8}}
\def\9bf{{\bf 9}}
\def\Rbf{{\bf R}}
\def\Nmat{\mathcal{N}}
\newtheorem{theorem}{Theorem}
\newtheorem{lemma}{Lemma}
\newtheorem{definition}{Definition}
\newtheorem{corollary}{Corollary}
\newtheorem{example}{Example}
\def\Rxx{\Rbf_{\ssstyle X\kern-.1em X}}
\let\ssstyle=\scriptscriptstyle
\def\Kout{\setbox1=\hbox{\Huge\bf K}\hbox to
1.05\wd1{\hspace{.05\wd1}
\def\Sout{\setbox1=\hbox{\Huge\bf S}\hbox to 1.05\wd1{\hspace{.05\wd1}

\newtheorem{algorithm}{Algorithm}

\title{Divergence Maximizing Linear Projection for Supervised Dimension Reduction}


\author{%
  Biao Chen \\
  Department of EECS\\
  Syracuse University\\
  Syracuse, NY 13244\\
  \texttt{bichen@syr.edu} \\
  \And
  Joshua Kortje\\
  Department of EECS \\
  Syracuse University\\
  Syracuse, NY 13244 \\
  \texttt{jmkortje@syr.edu} \\
}




\begin{document}
\maketitle

\begin{abstract}  
This paper proposes two linear projection methods for supervised dimension reduction using only the first and second-order statistics. The methods, each catering to a different parameter regime, are derived under the general Gaussian model by maximizing the Kullback-Leibler divergence between the two classes in the projected sample for a binary classification problem. They subsume existing linear projection approaches developed under simplifying assumptions of Gaussian distributions, such as these distributions might share an equal mean or covariance matrix. As a by-product, we establish that the multi-class linear discriminant analysis, a celebrated method for classification and supervised dimension reduction, is provably optimal for maximizing pairwise Kullback-Leibler divergence when the Gaussian populations share an identical covariance matrix. For the case when the Gaussian distributions share an equal mean, we establish conditions under which the optimal subspace remains invariant regardless of how the Kullback-Leibler divergence is defined, despite the asymmetry of the divergence measure itself. Such conditions encompass the classical case of signal plus noise, where both the signal and noise have zero mean and arbitrary covariance matrices. Experiments are conducted to validate the proposed solutions, demonstrate their superior performance over existing alternatives, and illustrate the procedure for selecting the appropriate linear projection solution.

\end{abstract}

\keywords{Kullback-Leibler divergence, supervised dimension reduction, linear projection, linear discriminate analysis, classification. }

\section{Introduction}

Dimension reduction has long been recognized as a key component in various data-driven learning applications, from clustering \cite{Cohen:15} to classification \cite{Wang:14} to pattern recognition \cite{Huang:19}. The most celebrated technique for dimension reduction is the Principal Component Analysis (PCA) which searches for a linear subspace that contains the highest data variability. Its success has been documented extensively in the literature for visualization and improved inference performance (see, e.g., \cite{Jolliffe:book}), especially when data or feature vectors have large dimensions relative to the sample size. 

PCA, in its original form, is an unsupervised approach for dimension reduction, i.e., labels of data are not considered even if they are available. It is interesting to note that while the foundation of PCA was laid out in \cite{Hotelling:1933}, the basic principle behind PCA was first described in \cite{Pearson:1901} under a supervised setting: dependent variables, i.e., `labels', are used to motivate the search for linear combinations of independent variables. While the development of PCA and other dimension reduction methods have largely followed the unsupervised setting, there has been increasing interest in developing supervised dimension reduction (SDR) techniques where the labels are accounted for when finding low-dimensional representations. Examples include supervised PCA either through the use of kernel trick \cite{Barshan:11} or manifold learning \cite{Ritchie:19}, non-negative matrix factorization \cite{Lee:99}, and supervised autoencoder \cite{Le:18}. See \cite{Chao:19} for a recent overview in this area.

As with many other inference problems, there is a strong incentive to develop SDR under the Gaussian setting. While the Gaussian model may not be universally applicable to various learning problems, understanding the solution to the Gaussian model often brings insights into the solution of the general problem. For example, linear discriminant analysis (LDA), originally derived as the optimal Bayesian boundary for classification problems with Gaussian samples \cite{Hastie:book}, has been successfully applied to a wide range of data analysis problems that are not necessarily Gaussian. This is due to its simplicity as it requires knowledge of only the first and second-order statistics and has a good generalization property, and versatility for data visualization and preprocessing. 

It is perhaps surprising that under the most general Gaussian model, the optimal low-dimensional representation is still largely unsolved for a binary classification problem, arguably the simplest supervised learning problem. There have been several notable attempts in recent years; however, they were invariably developed under some simplifying assumptions, e.g., the Gaussian models share identical mean or covariance matrix for the two classes. 
We note that LDA itself was developed under the common covariance matrix assumption and has long been recognized as a natural, {\em albeit} restrictive, way for dimension reduction for classification problems\footnote{While quadratic discriminant analysis (QDA) \cite{Hastie:book} was developed under such a general assumption, it is not a linear projection and loses its appeal to be used as a dimension reduction technique.}. The restriction is reflected in the rank of the projected subspace - with $K$ classes, LDA, or more precisely $K$-class LDA necessarily leads to a $(K-1)$-dimensional subspace. With $K=2$, i.e., a binary classification problem, LDA leads to a rank-$1$ subspace when used as a dimension reduction method. To relax this restriction, a linear projection, referred to as the Linear Optimal Low-rank (LoL) projection was proposed in \cite{Vogelstein:21}. LoL essentially concatenates the mean difference vector with the principal components of the common covariance matrix. The reduced dimension representation is shown to achieve a Chernoff Information (CI) larger than standard PCA. The other case when the Gaussian distributions share an identical (or equivalently, zero) mean with different covariance matrices was considered in  \cite{Dwivedi:21,Dwivedi:22}. Resorting to the Poincar\'{e} Separation Theorem, the authors established that a linear subspace can be constructed to maximize the Kullback-Leibler divergence (KLD) between the two distributions for the projected low-dimensional sample. Furthermore, by putting the first distribution in an isotropic position, the subspace is spanned by the eigenvectors of the second covariance matrix corresponding to a set of eigenvalues.

This paper explores SDR for binary classifications under the general Gaussian model, i.e., the distributions for different classes do not share an equal mean or covariance matrix. We propose two linear projection solutions, each catering to different parameter regimes of the Gaussian distributions, for maximizing the KLD with the low-dimensional representation. KLD has been widely used as a proxy for inference performance in various machine learning problems due to its tractability as an optimization metric \cite{Claici:20,Ji:22,Zhang:23}. It is also known to be the optimal error exponent of the type II error probability for a binary hypothesis testing problem, a.k.a., Stein's lemma \cite{Cover:book}. 

For the so-called large-$\mu$ regime, defined as the case when the mean separation between the classes is large relative to the difference in covariance matrices, our solution subsumes the LDA as its special case when the Gaussian distributions share an identical covariance matrix. As a by-product, we show that the $K$-class LDA provably maximizes pairwise KLD in the reduced $(K-1)$-dimensional subspace. 
For the small-$\mu$ regime, i.e., when the mean separation is not more than the difference in covariance matrices, the proposed solution strives for a balance in preserving the discriminative information in the mean vectors and the covariance matrices. It reduces to the optimal solution proposed in \cite{Dwivedi:22} when the Gaussian distributions share an equal mean. The proposed solution allows us to partially answer a question often left unaddressed when KLD is used as a proxy for inference performance: whether the projected subspace is invariant to the order with which the KLD is defined, as KLD itself is not symmetric in the two distributions with unequal covariance matrices. We establish conditions under which the reduced subspace remains invariant regardless of how KLD is defined. 
The conditions encompass the classical case of the `signal plus noise' model when both signal and noise are Gaussian variables with zero mean and arbitrary covariance matrices. 

Experiments using synthetic data validate our proposed approaches, both in terms of visualization and inference performance. Gradient descent algorithms using the Adam Optimizer are also implemented. Empirical results indicate that the proposed algorithms are close to global optima especially when the dimension of the subspace is not too small. The experiments also illustrate the procedure for determining the parameter regime to guide the selection of the appropriate linear projection solution. 

The rest of the paper is organized as follows. Section~\ref{sec:formulation} introduces the problem formulation. Existing linear projection approaches under the Gaussian models are reviewed in Section~\ref{sec:review} followed by some new observations in Section~\ref{sec:observations}. Section~\ref{sec:new} describes the two proposed linear projection solutions for the general Gaussian model. The multi-class SDR is treated in Section~\ref{sec:K} where we establish the optimality of multi-class LDA for maximizing pairwise KLD under the common covariance matrix model. Numerical experiments are provided in Section~\ref{sec:numerical} to demonstrate the effectiveness of the proposed SDR linear projection solutions, both for data visualization and classification performance. 
Section~\ref{sec:conclusion} concludes the paper and points out directions for future work.

\section{Problem Formulation \label{sec:formulation}}

Let $X\in R^d$ be a $d$-dimensional feature vector for a classification problem with $K$ classes. Under class $k$, $X$ is distributed according to $p_k\sim \mathcal{N}(\mu_k,\Sigma_k)$, for $k=1,\cdots,K$. We do not make any restrictive assumption on the parameters $(\mu_k,\Sigma_k)$ for any $k$. The objective is to find, for a given $r<d$, a linear projection from $R^d$ to $R^r$ to optimize, in some manner, the classification performance using only the reduced dimension sample in $R^r$. 

We will primarily consider the case with $K=2$, i.e., when the choice is binary. Extension to multi-class classification will be considered under the common covariance matrix assumption. With $K=2$, a single metric can be used to drive the search for low-dimensional representation. We use the KLD between $p_1$ and $p_2$ as the optimization criterion, defined as \cite{Kullback:51}
\[
D(p_1\|p_2)=E_{p_1}\left[ \log \frac{p_1(X)}{p_2(X)}\right],
\]
where $E_{p_1}[\cdot]$ denotes expectation with respect to $p_1$. The KLD between two multivariate Gaussian distributions $p_1\sim\mathcal{N}(\mu_1,\Sigma_1)$ and $p_2\sim\mathcal{N}(\mu_2,\Sigma_2)$ is
\begin{equation}
    D(p_1\|p_2)=\frac12 \left[\ln \frac{|\Sigma_2|}{|\Sigma_1|} - d + \mbox{tr}(\Sigma_2^{-1}\Sigma_1)+(\mu_2-\mu_1)^T\Sigma_2^{-1}(\mu_2-\mu_1)\right], \label{eq:kld}
\end{equation}
where $|\cdot|$ and tr($\cdot$) denote the 
determinant and trace of a square matrix. We assume throughout the paper that both $\Sigma_1$ and $\Sigma_2$ are full rank, hence the KLD in (\ref{eq:kld}) is well defined.  

Dimension reduction, in the context of the above inference problem, refers to finding a low-dimensional representation of the samples with which the inference can be carried out optimally. The most widely adopted approach under the Gaussian model is the linear projection-based dimension reduction. Specifically, for some positive integer $r<d$, the task is to design an $r\times d$ matrix $A$ of rank $r$ such that the inference, e.g., classification, can be carried out using $y=Ax$ instead of the original sample $x$.

 With $y=Ax$, the reduced dimension representation $y$ follows either $q_1\sim \mathcal{N}(A\mu_1,A\Sigma_1A^T)$ or $q_2\sim \mathcal{N}(A\mu_2,A\Sigma_2A^T)$ and the corresponding KLD is 
\begin{equation}
    D(q_1\|q_2)=\frac12 \left[\ln \frac{|A\Sigma_2A^T|}{|A\Sigma_1A^T|} - r + \mbox{tr}((A\Sigma_2A^T)^{-1}A\Sigma_1A^T) +(\mu_2-\mu_1)^TA^T(A\Sigma_2A^T)^{-1}A(\mu_2-\mu_1)\right]. \label{eq:reducedkld}
\end{equation}
The optimal $A$ is the one that maximizes (\ref{eq:reducedkld}). However, directly searching for $A$ that maximizes (\ref{eq:reducedkld}) can be daunting when $d$ is large. 
Our goal is to derive easy-to-implement linear projection solutions that are provably optimal in maximizing KLD under certain conditions. Perhaps more importantly, while it is tempting to devise a numerical procedure to find $A$ that maximizes (\ref{eq:reducedkld}), such an approach loses appeal when dealing with datasets that are not necessarily Gaussian and, as (\ref{eq:reducedkld}) is not a concave function of $A$, is susceptible to local optima even with Gaussian data. On the other hand, a simple and interpretable linear projection solution derived under the Gaussian assumption often has much better generalization property \cite{Abu-Mostafa:book} and is also easier to adapt for much broader applications when the data deviates from Gaussian. Finally, extensive numerical experiments suggest that the solutions of the proposed algorithms are likely to be close to the global optimum - gradient descent algorithms with the proposed solutions as initialization always outperform that using random initialization in all the cases we have tested.

The KLD, as with any meaningful information measures, obeys the data processing inequality (DPI) \cite{Cover:book}. Therefore, for any $A\in R^{r\times d}$,
\[
D(q_1\|q_2)\leq D(p_1\|p_2).
\]

A direct consequence of the DPI for the KLD is the following lemma which states that in looking for the optimal $A$ matrix, we can limit our search to the set of orthonormal matrices\cite{Dwivedi:22}. 
\begin{lemma} [Theorem 1 \cite{Dwivedi:22}] \label{lem:matrix}
    For any $r\times d$ matrix $A$ of rank $r$, there exists an $r\times d$ orthonormal matrix $B$, i.e., the rows of $B$ are orthonormal vectors, such that the resulting KLD is identical.  
\end{lemma}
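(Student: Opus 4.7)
My approach is to factor any full row-rank $A$ as $A = MB$, where $B \in \mathbb{R}^{r\times d}$ has orthonormal rows that span the row space of $A$ and $M \in \mathbb{R}^{r\times r}$ is invertible. Such a factorization is produced, for example, by the (thin) QR decomposition of $A^T$, or equivalently by Gram--Schmidt applied to the rows of $A$; since $A$ has rank $r$, the resulting $M$ is guaranteed to be invertible. The claim is then that under this substitution every $M$ cancels out of the expression (\ref{eq:reducedkld}) for the projected KLD, so that $D(q_1\|q_2)$ computed with $A$ equals $D(q_1\|q_2)$ computed with $B$.

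The verification is a direct substitution, term by term. For the log-determinant term, $|A \Sigma A^T| = |M|^2 \, |B \Sigma B^T|$, so the $|M|^2$ factors cancel in the ratio $|A\Sigma_2 A^T|/|A\Sigma_1 A^T|$. For the trace term, writing $(A\Sigma_2 A^T)^{-1} = M^{-T}(B\Sigma_2 B^T)^{-1} M^{-1}$ and $A\Sigma_1 A^T = M B\Sigma_1 B^T M^T$, the cyclic property of the trace eliminates $M^T$ and $M^{-T}$ against each other, leaving $\mathrm{tr}((B\Sigma_2 B^T)^{-1} B\Sigma_1 B^T)$. The quadratic form in $\mu_2 - \mu_1$ simplifies by the same identity, since $A^T(A\Sigma_2 A^T)^{-1} A = B^T M^T M^{-T}(B\Sigma_2 B^T)^{-1} M^{-1} M B = B^T(B\Sigma_2 B^T)^{-1} B$. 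Finally the constant $-r$ is unchanged because $B$ has the same number of rows as $A$.

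A more conceptual alternative invokes the DPI inequality stated immediately before the lemma, applied in both directions. The vectors $Ax$ and $Bx$ are related by the invertible linear maps $Ax = M(Bx)$ and $Bx = M^{-1}(Ax)$. Since deterministic invertible transformations preserve KLD (apply DPI once in each direction), the KLDs of the two projected distributions must coincide. This viewpoint makes the underlying reason transparent: the Gaussian KLD in (\ref{eq:reducedkld}) depends on $A$ only through the row space of $A$, and $B$ is chosen to share that row space. The only ``obstacle'' is bookkeeping --- carefully tracking the cancellations in the direct computation --- and is entirely routine linear algebra, so I would present the substitution proof for transparency but remark on the DPI argument as the structural reason the claim must hold.
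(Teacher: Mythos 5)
Your proposal is correct. Your primary route --- factoring $A=MB$ with $M$ invertible and $B$ having orthonormal rows, then verifying term by term that every occurrence of $M$ cancels out of the Gaussian KLD formula (\ref{eq:reducedkld}) --- is a genuinely different argument from the paper's, which dispenses with the computation entirely: the paper writes $A=TB$, notes $B=T^{-1}A$, and applies the data processing inequality once in each direction to conclude $D(q_1\|q_2)\leq D(h_1\|h_2)\leq D(q_1\|q_2)$. That is exactly the ``conceptual alternative'' you sketch in your final paragraph, so you have in effect found the paper's proof as well. The trade-off is worth noting: your substitution proof is self-contained and makes visible \emph{where} the invariance comes from in the closed-form Gaussian expression (determinant ratio, cyclic trace, and the identity $A^T(A\Sigma_2A^T)^{-1}A=B^T(B\Sigma_2B^T)^{-1}B$), but it is tied to the Gaussian formula; the DPI argument is two lines and shows the stronger fact that \emph{any} divergence obeying DPI, for \emph{any} underlying distributions, depends on $A$ only through its row space. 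One small point in your favor: you correctly attribute the factorization to the QR decomposition of $A^T$ (equivalently Gram--Schmidt on the rows of $A$), whereas the paper loosely says ``QR factorization of $A$,'' which as stated would not produce orthonormal rows for an $r\times d$ matrix with $r<d$.
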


We note that Lemma~\ref{lem:matrix} is a direct consequence of the DPI for KLD since for any rank-$r$ $r\times d$ matrix, there always exist a full rank $r\times r$ matrix $T$ and an orthonormal $r\times d$ matrix $B$ such that $A=TB$ (e.g., through QR factorization of $A$). Let $D(q_1\|q_2)$ be the KLD of the reduced dimension sample using the projection matrix $A$ whereas $D(h_1\|h_2)$ be the KLD of the reduced dimension sample using $B$. Then with $A=TB$, $D(q_1\|q_2)\leq D(h_1\|h_2)$. But $B=T^{-1}A$, hence $D(h_1\|h_2)\leq D(q_1\|q_2)$. 

We do not consider the choice of $r$ in this paper. Rather, the projection solution is provided for any given $r<d$. In real applications, cross-validation can be employed to select $r$ as typically done for any hyperparameters in learning problems.

\section{Review of Existing Approaches \label{sec:review}}
Existing linear projection approaches for SDR with Gaussian data were invariably developed under a simplified model: the two Gaussian distributions share either 1) an equal covariance matrix or 2) an identical mean.

\subsection{Equal Covariance Matrix}
With $\Sigma_1=\Sigma_2\triangleq \Sigma$, the KLD between the two Gaussian distributions reduces to 
\begin{equation}
    D(p_1\|p_2)=\frac12 (\mu_2-\mu_1)^T\Sigma^{-1}(\mu_2-\mu_1). \label{eq:kldequalSigma}
\end{equation}

We note that the KLD for this case is symmetric between the two Gaussian distributions, i.e., $D(p_1\|p_2)=D(p_2\|p_1)$.
The most well-known SDR approach for this special case is LDA. For the binary case, LDA projects the sample in the direction of $a=\Sigma^{-1}(\mu_2-\mu_1)$, resulting in the statistic $y=(\mu_2-\mu_1)^T\Sigma^{-1} x$. With an equal covariance matrix, this statistic is equivalent to the likelihood ratio of the two Gaussian distributions, the very reason that this is the optimal statistic for the underlying binary hypothesis testing problem.

We digress here to mention that the generalization of the two-class LDA described above to the multi-class, i.e., the multi-class LDA, is often used as a dimension reduction method. However, the dimension of reduced subspace needs necessarily to be of no larger than $K-1$ for $K$ classes (c.f. Section~\ref{sec:K}). This often puts an undesired restriction when using it as an SDR method. Relaxing this restriction, the authors in \cite{Vogelstein:21} suggested concatenating the mean difference vectors with the principal components of the common covariance matrix $\Sigma$. Specifically, for the case of $K=2$, the $r$-dimensional subspace can be found by concatenating $\mu_1-\mu_2$ along with $r-1$ principal components corresponding to $\Sigma$. We note that \cite{Vogelstein:21} uses CI as the optimizing criterion. The CI between two multivariate normal distributions, $p_1\sim \mathcal{N}(\mu_1,\Sigma_1)$ and $p_2\sim \mathcal{N}(\mu_2,\Sigma_2)$ is 
\[
C(p_1,p_2) = \max_{0\leq s \leq 1} \left(\frac{s(1-s)}{2}(\mu_2-\mu_1)^T\Sigma_s^{-1}(\mu_2-\mu_1)+\frac12 \log \frac{|\Sigma_s|}{|\Sigma_1|^s|\Sigma_2|^{1-s}}\right),
\]
where $\Sigma_s=s\Sigma_1+(1-s)\Sigma_2$.
Under the equal covariance matrix assumption, the CI between two Gaussian distributions is reduced to
\begin{equation}
    C(p_1, p_2)=\frac18 (\mu_2-\mu_1)^T\Sigma^{-1}(\mu_2-\mu_1). \label{eq:CI}
\end{equation}
Clearly, with an equal covariance matrix,  maximizing CI is equivalent to maximizing KLD.

\subsection{Equal Mean}

With $\mu_1=\mu_2\triangleq \mu$, 
the corresponding KLD becomes 
\begin{equation}
    D(p_1\|p_2)=\frac12 \left[\ln \frac{|\Sigma_2|}{|\Sigma_1|} - d + \mbox{tr}(\Sigma_2^{-1}\Sigma_1)\right]. \label{eq:kldequalmu}
\end{equation}

With $\Sigma_1$ being full rank, the linear transformation $x'=\Sigma_1^{-\frac12}(x-\mu)$ results in two Gaussian distributions $p_1'\sim \mathcal{N}(0,I)$ and $p_2'\sim \mathcal{N}(0,\tilde\Sigma)$ where $\tilde\Sigma=\Sigma_1^{-\frac12} \Sigma_2 \Sigma_1^{-\frac12}$. 
The resulting KLD is 
\[
D(p_1'\|p_2')=\frac12 \left[\ln |\tilde\Sigma| - d + \mbox{tr}(\tilde\Sigma^{-1})\right]. \label{eq:kld2}
\]
Using the Poincar\'{e} separation theorem\cite{Magnus:book}, it was established in \cite{Dwivedi:22}  that the optimal subspace in $R^r$, i.e., that achieves the maximum KLD between the two distributions of the projected sample, is spanned by the eigenvectors of $\tilde\Sigma$ with the corresponding eigenvalues maximizing 
\begin{equation}
    g(\lambda)=\frac12 \left(\ln \lambda-1+\frac{1}\lambda\right).
\label{eq:g}
\end{equation} 
The resulting KLD in its $R^r$ representation is 
\[
D(q_1\|q_2) = \sum_{i=1}^r g(\gamma_i).
\]
where $\gamma_i$'s are the $r$ eigenvalues maximizing $g(\gamma)$. It is interesting to note that, in contrast with PCA, the eigenvalues are not necessarily the $r$ largest ones of $\tilde\Sigma$. This is intuitive given that the task here is to discriminate between the two classes, not preserve the variability of any one particular class. Given that $p_1'$ is now isotropic in all directions, the direction favorable for discriminating the two classes may not be the direction in which $p_2'$ has the largest variance. In the special case when the eigenvalues of $\tilde\Sigma$ are all less than $1$, the smallest eigenvalues are indeed the ones that maximize (\ref{eq:g}) \cite{Dwivedi:22}. 

Since $\Sigma_1\neq \Sigma_2$, $D(p_1\|p_2)\neq D(p_2\|p_2)$ as (\ref{eq:kldequalmu}) is not symmetric in the pair $(\Sigma_1,\Sigma_2)$. An interesting question left unanswered is that if $D(p_2\|p_1)$ is used instead of $D(p_1\|p_2)$ as the optimizing criterion, will the resulting subspace be different?

\section{New Observations on the Special Cases \label{sec:observations}}

Before presenting solutions to the general case when each class has distinct mean vectors and covariance matrices, we first provide some interesting observations to the two special cases reviewed in Section~\ref{sec:review}. These observations help motivate our proposed solutions for the general case. 

\subsection{Equal Covariance Matrix}
The LDA is derived by finding the optimal decision boundary under the Bayesian inference framework for Gaussian samples with an equal covariance matrix. We establish that for the case with two classes, the LDA is also the optimal dimension reduction method if the objective is to maximize the KLD for the low-dimensional data. 
In fact, the low-dimensional representation found through LDA preserves the entire KLD, and consequently, the CI since the two metrics are equivalent (save for a scaling factor of $4$) for the equal covariance case. Therefore, there is no new information to be gained by including additional dimensions to the subspace. We emphasize again that this is only true with the equal covariance matrix assumption. An intuitive explanation for this phenomenon is that with an equal covariance matrix, both Gaussian distributions can be put into an isotropic position through the same unitary transformation (the eigenvectors of the common covariance matrix). Thus, there is no discrimination information along any direction except along the mean difference vector of the two distributions when put in an isotropic position. Indeed, for this special case, also considered in \cite{Vogelstein:21}, a one-dimensional projection suffices for the binary classification problem and outperforms the LoL in achievable KLD or CI. 

\begin{lemma}
    With $\Sigma_1=\Sigma_2=\Sigma$, a simple one-dimensional projection preserves the entire CI or KLD. 
    Furthermore, this one-dimensional projection coincides with the LDA direction $\Sigma^{-1}(\mu_2-\mu_1)$.   \label{lem:LDA}
\end{lemma}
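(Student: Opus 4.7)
The plan is to show directly that the one-dimensional projection along $a = \Sigma^{-1}(\mu_2 - \mu_1)$ achieves the full KLD given in equation~(\ref{eq:kldequalSigma}); combined with the data processing inequality, this forces it to be optimal, and the CI claim follows since under an equal covariance matrix CI equals KLD up to the factor $\tfrac14$ displayed in (\ref{eq:CI}) versus (\ref{eq:kldequalSigma}).

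First, I would specialize the reduced-dimension KLD formula in (\ref{eq:reducedkld}) to the case $r=1$, $A = a^T$ with $a \in R^d$, $\Sigma_1 = \Sigma_2 = \Sigma$. Since $\ln(a^T \Sigma a / a^T \Sigma a) = 0$ and $(a^T \Sigma a)^{-1}(a^T \Sigma a) = 1$, the expression collapses to
\begin{equation*}
D(q_1 \| q_2) \;=\; \frac{\bigl(a^T(\mu_2-\mu_1)\bigr)^2}{2\, a^T \Sigma a}.
\end{equation*}
Substituting $a = \Sigma^{-1}(\mu_2-\mu_1)$, the denominator becomes $2(\mu_2-\mu_1)^T \Sigma^{-1}(\mu_2-\mu_1)$ and the numerator becomes the square of the same quantity, which simplifies exactly to $\tfrac12 (\mu_2-\mu_1)^T \Sigma^{-1}(\mu_2-\mu_1) = D(p_1 \| p_2)$.

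Next I would invoke the DPI, already noted in the excerpt, to conclude $D(q_1 \| q_2) \leq D(p_1 \| p_2)$ for every linear projection. Since equality is achieved by the LDA direction, this projection is optimal and no additional dimensions can add discriminative information (the residual is zero). As an alternative self-contained justification of optimality that avoids appealing to the DPI, one can apply the generalized Cauchy--Schwarz inequality in the inner product $\langle x, y \rangle_\Sigma = x^T \Sigma y$: writing $a^T(\mu_2-\mu_1) = a^T \Sigma \,\Sigma^{-1}(\mu_2-\mu_1) = \langle a, \Sigma^{-1}(\mu_2-\mu_1)\rangle_\Sigma$ gives
\begin{equation*}
\bigl(a^T(\mu_2-\mu_1)\bigr)^2 \;\leq\; (a^T \Sigma a)\,(\mu_2-\mu_1)^T \Sigma^{-1}(\mu_2-\mu_1),
\end{equation*}
with equality iff $a$ is proportional to $\Sigma^{-1}(\mu_2-\mu_1)$. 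Dividing by $2\,a^T\Sigma a$ recovers the bound and identifies the maximizer uniquely (up to scale), which also matches the LDA direction.

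Finally, the CI claim is immediate: since the two covariance matrices of the projected samples are also equal (both equal to $a^T\Sigma a$), formula (\ref{eq:CI}) applies in the projected space and yields $C(q_1,q_2) = \tfrac14 D(q_1 \| q_2)$. Hence the one-dimensional projection that preserves the KLD simultaneously preserves the CI. There is no real obstacle here; the only subtlety is to note that the scaling of $a$ does not matter (the projected KLD is homogeneous of degree $0$ in $a$), so the direction is the LDA direction up to an arbitrary nonzero scalar, consistent with Lemma~\ref{lem:matrix}.
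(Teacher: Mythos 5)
Your proof is correct and follows essentially the same route as the paper: project along $a=\Sigma^{-1}(\mu_2-\mu_1)$, verify the projected KLD equals $\tfrac12(\mu_2-\mu_1)^T\Sigma^{-1}(\mu_2-\mu_1)$, and invoke the DPI for optimality, with the CI claim following from the proportionality of the two metrics under equal covariances. Your added Cauchy--Schwarz argument is a nice self-contained alternative to the DPI step that additionally identifies the maximizing direction as unique up to scale, but it does not change the substance of the proof.
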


\begin{proof}

To show that the one-dimension projection is optimal for maximizing KLD or CI, choose $a=\Sigma^{-1}(\mu_2-\mu_1)$ as the projection direction. Thus $y=a^Tx=(\mu_2-\mu_1)\Sigma^{-1}x$ is precisely the same statistic obtained via LDA, and the resulting two Gaussian distributions are respectively $q_1\sim \mathcal{N}(a^T\mu_1,a^T\Sigma a)$ and $q_2\sim\mathcal{N}(a^T\mu_2,a^T\Sigma a)$. Substitute $a$ with $\Sigma^{-1}(\mu_2-\mu_1)$, we have 
\begin{align*}
    q_1 &\sim \mathcal{N}((\mu_2-\mu_1)^T\Sigma^{-1}\mu_1,(\mu_2-\mu_1)^T\Sigma^{-1}(\mu_2-\mu_1)), \\
    q_2 &\sim \mathcal{N}((\mu_2-\mu_1)^T\Sigma^{-1}\mu_2,(\mu_2-\mu_1)^T\Sigma^{-1}(\mu_2-\mu_1)).
\end{align*}
Substitute the means and variances in (\ref{eq:kld}) with that corresponding to $q_1$ and $q_2$, we obtain
\begin{align*}
    D(q_1\|q_2)&= \frac12\frac{ ((\mu_2-\mu_1)^T\Sigma^{-1}\mu_2-(\mu_2-\mu_1)^T\Sigma^{-1}\mu_1)^T((\mu_2-\mu_1)^T\Sigma^{-1}\mu_2-(\mu_2-\mu_1)^T\Sigma^{-1}\mu_1)}{(\mu_2-\mu_1)^T\Sigma^{-1}(\mu_2-\mu_1)} \\
    &=\frac12 (\mu_2-\mu_1)^T\Sigma^{-1}(\mu_2-\mu_1).
\end{align*}
This completely recovers (\ref{eq:kldequalSigma}). Therefore, it preserves the KLD of the original sample. Given that the KLD satisfies DPI, any other projection will not yield a larger KLD. The same statement applies to CI. 
\end{proof}

The result is not surprising - LDA is equivalent to the likelihood ratio statistic and was derived as the optimal Bayesian test for the equal covariance matrix case thus it ought to be optimal in the error exponent both in the Bayesian setting (CI) and the Neyman-Pearson setting (KLD) for the binary hypothesis testing. What is significant is that this result can indeed be generalized to the $K$-class case where one can construct a $(K-1)$-dimensional linear subspace that preserves all pairwise KLD when $K>2$ classes share an equal covariance matrix (c.f. Section~\ref{sec:K}).
 
\subsection{Equal Mean}

Consider the other special case when the Gaussian distributions share the same mean vector but different covariance matrices. The procedure described in \cite{Dwivedi:22} can be shown to be equivalent to solving a generalized eigen decomposition problem. 
\begin{lemma}
    For any $1\leq r < d$, the $r$-dimensional subspace that maximizes the KLD $D(q_1\|q_2)$ is spanned by $r$ generalized eigenvectors of the matrix pair $(\Sigma_2,\Sigma_1)$, corresponding to the $r$ generalized eigenvalues that maximize $g(\cdot)$ in (\ref{eq:g}). \label{lem:generalized}
\end{lemma}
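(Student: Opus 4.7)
The plan is to reduce the lemma to the whitened-space result from \cite{Dwivedi:22} recalled just above the statement, via an invertible change of variables, and then recognize that the inverse change pulls the standard eigenvalue problem for $\tilde\Sigma$ back into the generalized eigenvalue problem for the pair $(\Sigma_2,\Sigma_1)$.

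First I would apply the whitening change of variables $x' = \Sigma_1^{-1/2}(x-\mu)$ from Section~\ref{sec:review}, so that the two populations become $p_1' \sim \mathcal{N}(0,I)$ and $p_2' \sim \mathcal{N}(0,\tilde\Sigma)$ with $\tilde\Sigma = \Sigma_1^{-1/2}\Sigma_2 \Sigma_1^{-1/2}$. For any $r\times d$ matrix $A$ acting on $x$, one has $Ax = (A\Sigma_1^{1/2})x' + A\mu$; since KLD is invariant under a common translation of both conditional distributions, the value of $D(q_1\|q_2)$ under $A$ equals the KLD obtained by applying $A' := A\Sigma_1^{1/2}$ to $x'$. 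The correspondence $A\mapsto A\Sigma_1^{1/2}$ is a bijection on rank-$r$ matrices, so optimizing over rank-$r$ projections of $x$ is equivalent to optimizing over rank-$r$ projections of $x'$.

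Next I would invoke the result already quoted from \cite{Dwivedi:22}: the optimum of the whitened problem is attained by any $B$ whose rows are orthonormal eigenvectors $u_1,\ldots,u_r$ of $\tilde\Sigma$ whose corresponding eigenvalues maximize $\sum_{i=1}^{r} g(\gamma_i)$. By Lemma~\ref{lem:matrix}, the KLD depends only on the row space, so the optimal projection in the original coordinates can be taken to be $A^* = B\Sigma_1^{-1/2}$, whose row space in $R^d$ is precisely $\mathrm{span}\{\Sigma_1^{-1/2} u_1,\ldots,\Sigma_1^{-1/2} u_r\}$.

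Finally, a one-line identity closes the proof: if $\tilde\Sigma u = \gamma u$ and I set $v = \Sigma_1^{-1/2} u$, then multiplying $\Sigma_1^{-1/2}\Sigma_2\Sigma_1^{-1/2} u = \gamma u$ on the left by $\Sigma_1^{1/2}$ yields $\Sigma_2 v = \gamma \Sigma_1 v$, i.e., $v$ is a generalized eigenvector of the pair $(\Sigma_2,\Sigma_1)$ with the same eigenvalue $\gamma$. Hence the optimal $r$-dimensional subspace is spanned by $r$ generalized eigenvectors of $(\Sigma_2,\Sigma_1)$ whose generalized eigenvalues maximize $g$. I do not anticipate a serious obstacle: the heavy lifting (Poincar\'{e} separation) is absorbed in the cited theorem, and the only care required is to track the row space, rather than any particular orthonormal representative, through the whitening change of variables -- a point that Lemma~\ref{lem:matrix} handles cleanly.
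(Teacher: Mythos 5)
Your proof is correct and follows essentially the same route as the paper's: whiten with $\Sigma_1^{-1/2}$, invoke the cited result of Dwivedi et al.\ for the isotropic case, and observe that $v=\Sigma_1^{-1/2}u$ converts eigenvectors of $\tilde\Sigma$ into generalized eigenvectors of $(\Sigma_2,\Sigma_1)$ with the same eigenvalues. The only (welcome) difference is that you make explicit, via the bijection $A\mapsto A\Sigma_1^{1/2}$ and translation invariance of the KLD, why optimizing over projections of $x$ and of $x'$ are equivalent problems---a point the paper's proof leaves implicit.
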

\begin{proof}
    Generalized eigen decomposition of a pair of square matrices $A$ and $B$ is expressed as 
\[
A v_i = \lambda_i B v_i
\]
where $(\lambda_i,v_i)$ are the generalized eigenvalue and eigenvector pair. 

For the pair of Gaussian distributions $p_1\sim \mathcal{N}(0,\Sigma_1)$ and $p_1\sim \mathcal{N}(0,\Sigma_2)$, the optimal $r$-dimensional subspace was found in \cite{Dwivedi:22}, described below.
\begin{enumerate}
    \item Diagonalize $\Sigma_1$ by pre-multiplying the sample with $\Sigma_1^{-\frac12}$.
    \item Compute the eigen decomposition of $\tilde\Sigma=\Sigma_1^{-\frac12}\Sigma_2\Sigma_1^{-\frac12}=U\Lambda U^T$.
    \item Find the $r$ eigenvalues in $\Lambda$ that yields the largest value of 
    \[
    g(\lambda)=\frac12 (\ln \lambda-1+\frac{1}\lambda).
    \]
    \item Choose the corresponding eigenvectors in $U$ to form the projection matrix to the optimal subspace.
\end{enumerate}

Any eigenvalue/eigenvector pair $(\lambda,u)$ for the matrix $\tilde\Sigma$ satisfies
\[
\Sigma_1^{-\frac12}\Sigma_2\Sigma_1^{-\frac12} u = \lambda  u
\]
Define $v=\Sigma_1^{-\frac12}u$. Then 
\[
\Sigma_1^{-\frac12}\Sigma_2 v = \lambda \Sigma_1^{\frac12} v
\]
Pre-multiply both sides by $\Sigma_1^{\frac12}$, we have 
\[
\Sigma_2 v = \lambda \Sigma_1 v
\]
Therefore, the eigenvalues of $\tilde\Sigma$ are precisely the generalized eigenvalues of the matrix pair $(\Sigma_2,\Sigma_1)$. Thus maximizing $g(\lambda)$ results in the same set of generalized eigenvalues of $(\Sigma_2,\Sigma_1)$. Furthermore, the eigenvectors of $\tilde\Sigma$ (the vectors $u_i$, $i=1,\cdots,d$)  and generalized eigenvectors of $(\Sigma_2,\Sigma_1)$ (the vectors $v_i$, $i=1,\cdots,d$) are 1-1 correspondence with $u_i=\Sigma_1^{\frac12}v_i$. With the transformed sample $y=\Sigma_1^{-\frac12}(x-\mu_1)$, the $r$-dimensional representation is found by projecting $y$ using the $r\times d$ matrix whose rows are a set of $u_i$'s. This is equivalent to projecting the original sample $x$ using $\Sigma_1^{-\frac12} u_i$, which is precisely $v_i$. Thus for the original sample $x$, the projected subspace is spanned by the corresponding set of generalized eigenvectors of $(\Sigma_2,\Sigma_1)$.
\end{proof}
Reformulating the solution using generalized eigen decomposition helps us to gain insight into the structure of the optimal low-dimensional subspace. In particular, for the case with unequal covariance matrices, $D(p_1\|p_2)\neq D(p_2\|p_1)$.
If one uses $D(q_2\|q_1)$ as the divergence to be maximized, the subspace will then be the $r$-dimensional subspace spanned by $r$ generalized eigenvectors of $(\Sigma_1,\Sigma_2)$, i.e., with the roles of $\Sigma_1$ and $\Sigma_2$ swapped. However, from the definition of generalized eigen decomposition, the generalized eigenvalues are reciprocal of each other and the generalized eigenvectors are identical to each other when the order of the two matrices is reversed. It remains to show under what conditions the selected eigenvalues, i.e., those maximizing $g(\lambda)$ defined in (\ref{eq:g}) correspond to the same set of eigenvectors. We have the following result.

\begin{theorem}
    If the generalized eigenvalues are all greater than $1$ or all smaller than $1$, then maximizing either $D(q_1\|q_2)$ or $D(q_2\|q_1)$ results in an identical subspace. \label{thm:order}
\end{theorem}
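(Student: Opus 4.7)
The plan is to exploit the generalized eigenvalue reformulation from Lemma~\ref{lem:generalized} together with an elementary monotonicity analysis of the function $g$ in (\ref{eq:g}). First, applying Lemma~\ref{lem:generalized} directly, the subspace that maximizes $D(q_1\|q_2)$ is spanned by the $r$ generalized eigenvectors of $(\Sigma_2,\Sigma_1)$ whose eigenvalues $\lambda_i$ deliver the $r$ largest values of $g(\lambda_i)$. Swapping the roles of $p_1$ and $p_2$ in the same lemma, the subspace that maximizes $D(q_2\|q_1)$ is spanned by the $r$ generalized eigenvectors of $(\Sigma_1,\Sigma_2)$ whose eigenvalues $\nu_i$ deliver the $r$ largest values of $g(\nu_i)$.

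Next I would use the standard fact that the pairs $(\Sigma_1,\Sigma_2)$ and $(\Sigma_2,\Sigma_1)$ share the same generalized eigenvectors, with eigenvalues related by $\nu_i = 1/\lambda_i$ (if $\Sigma_2 v_i = \lambda_i \Sigma_1 v_i$ with $\lambda_i\neq 0$, then $\Sigma_1 v_i = (1/\lambda_i)\Sigma_2 v_i$). This reduces the theorem to a purely scalar question: under the stated hypothesis on the $\lambda_i$'s, does the index set selecting the top-$r$ values of $g(\lambda_i)$ coincide with the index set selecting the top-$r$ values of $g(1/\lambda_i)$?

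To settle this, I would analyze $g$ directly. A one-line derivative computation gives $g'(\lambda)=(\lambda-1)/(2\lambda^2)$, so $g$ is strictly decreasing on $(0,1)$ and strictly increasing on $(1,\infty)$, with a unique minimum at $\lambda=1$. In the regime where every $\lambda_i>1$, ranking the indices by $g(\lambda_i)$ in decreasing order coincides with ranking the $\lambda_i$ themselves in decreasing order; meanwhile every $1/\lambda_i<1$ lies in the decreasing branch of $g$, so ranking by $g(1/\lambda_i)$ in decreasing order coincides with ranking the $1/\lambda_i$ in increasing order, i.e., again with ranking the $\lambda_i$ in decreasing order. The two orderings of indices agree, so the top-$r$ index sets coincide and both optimizations select the same $r$ generalized eigenvectors. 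The complementary regime where every $\lambda_i<1$ is handled by the same argument with the roles of the two monotonicity branches of $g$ interchanged.

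I do not anticipate a serious obstacle here: the argument is essentially monotonicity bookkeeping. The only care needed is to verify the eigenvector/reciprocal-eigenvalue correspondence between $(\Sigma_1,\Sigma_2)$ and $(\Sigma_2,\Sigma_1)$ and to track on which branch of $g$ each eigenvalue lies so that the induced ranking of indices is read off correctly. Note that the hypothesis is essential: if some $\lambda_i$ exceed $1$ and others fall below $1$, the two rankings can genuinely disagree because the $\lambda_i$ and $1/\lambda_i$ then lie on different monotonicity branches of $g$, and one can no longer conclude that the selected index sets coincide.
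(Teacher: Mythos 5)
Your proposal is correct and follows essentially the same route as the paper: reduce both optimizations to generalized eigenproblems via Lemma~\ref{lem:generalized}, use the eigenvector-preserving, eigenvalue-reciprocating correspondence between $(\Sigma_2,\Sigma_1)$ and $(\Sigma_1,\Sigma_2)$, and then exploit the monotonicity of $g$ on each side of $\lambda=1$ to show the two top-$r$ selections pick the same eigenvectors. The only cosmetic difference is that you verify the monotonicity by computing $g'(\lambda)=(\lambda-1)/(2\lambda^2)$ explicitly, whereas the paper simply asserts it with reference to a plot.
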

\begin{proof}
   Lemma~\ref{lem:generalized} states that if two Gaussian distributions share the same mean vector, then the optimal subspace of dimension $r$, i.e., one that maximizes $D(q_1\|q_2)$, can be found by solving the generalized eigen decomposition problem
\begin{equation}
    \Sigma_2 v = \lambda \Sigma_1 v. \label{eq:general1}
\end{equation}
Apply Lemma~\ref{lem:generalized} to maximize $D(q_2\|q_1)$, i.e., with the role of $q_1$ and $q_2$ reversed, the optimal subspace can now be found by solving the following generalized eigen decomposition problem 
\begin{equation}
    \Sigma_1 v' = \lambda' \Sigma_2 v'. \label{eq:general2}
\end{equation}

Comparing (\ref{eq:general1}) with (\ref{eq:general2}), it is clear that any eigenvalue/eigenvector pair $(\lambda,v)$ for $(\Sigma_2,\Sigma_1)$ has a corresponding eigenvalue/eigenvector pair $(\lambda',v')$ for $(\Sigma_1,\Sigma_2)$ with $\lambda=1/\lambda'$ and $v=v'$. That is, the eigenvectors are identical with each other while the eigenvalues are reciprocal of each other.

Consider the following two cases.
\begin{enumerate}
    \item $\lambda_i\geq 1$ for all $1\leq i \leq d$. Thus all the generalized eigenvalues for $(\Sigma_2,\Sigma_1)$ are greater than $1$. From the expression $g(\lambda)$ in (\ref{eq:g}), plotted in Fig.~\ref{fig:g}, it is trivial to verify that it is monotone increasing in $\lambda$ for $\lambda\geq 1$. Thus one should choose the largest $r$ eigenvalues and the corresponding eigenvectors form the optimal subspace. 
    \item $\lambda_i\leq 1$ for all $1\leq i \leq d$. Thus all the generalized eigenvalues for $(\Sigma_2,\Sigma_1)$ are less than $1$. From the expression $g(\lambda)$ in (\ref{eq:g}), it is again trivial to verify that it is monotone decreasing in $\lambda$ for $0< \lambda\leq 1$. Thus one should choose the smallest $r$ eigenvalues and the corresponding eigenvectors form the optimal subspace. 
\end{enumerate}
\begin{figure}[h]
    \centering
    \includegraphics[width=3in]{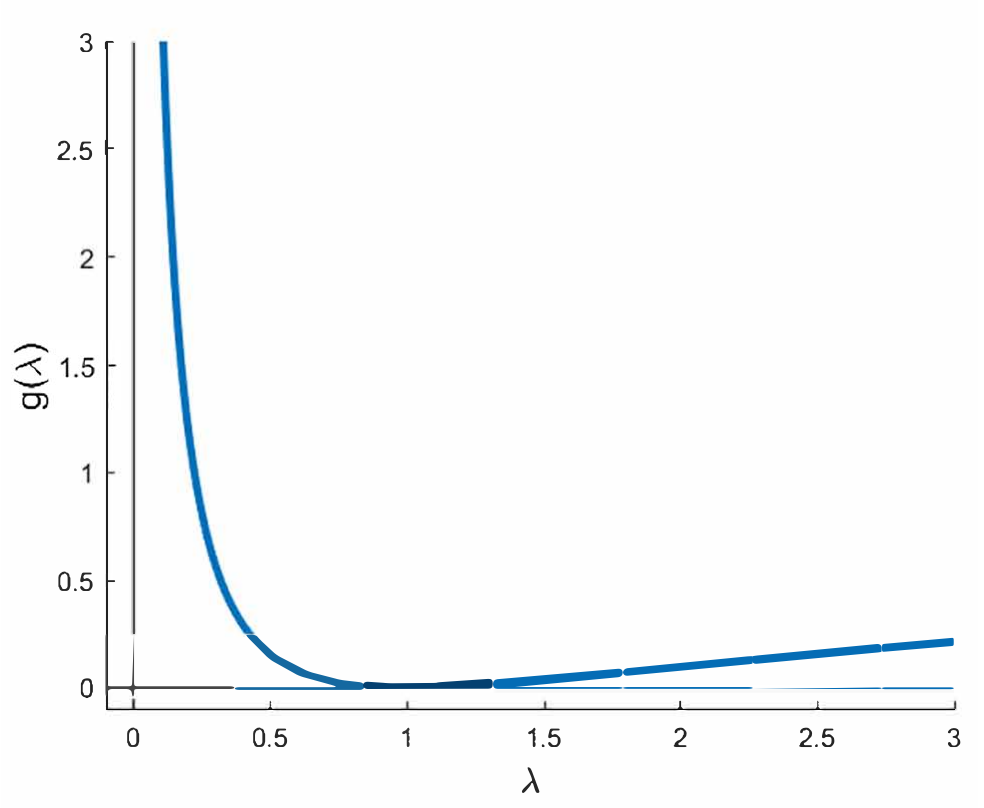}
    \caption{The $g(\cdot)$ function in (\ref{eq:g}).}
    \label{fig:g}
\end{figure}
Since the generalized eigenvalues of $(\Sigma_2,\Sigma_1)$ and $(\Sigma_1,\Sigma_2)$ are reciprocals of each other, $\lambda_i\geq 1$ implies $\lambda_i'\leq 1$. For case 1,  the generalized eigenvectors corresponding to the $r$ largest eigenvalues $\lambda_i$ are the same set of eigenvectors corresponding to the $r$ smallest eigenvalues $\lambda_i'$. With $\lambda_i'\leq 1$ for all $i$, these are precisely the basis of the optimal subspace when $D(q_2\|q_1)$ is to be maximized. Case 2 can be similarly proved.
\end{proof}
An important case for this class of problems is the classical `signal plus noise' model, as stated below.
\begin{corollary}
    Let $s\sim \mathcal{N}(0,\Sigma_s)$ and $n\sim \mathcal{N}(0,\Sigma_n)$. Furthermore, $s$ and $n$ are independent of each other. Let the sample $x$ be either $x=n$ (noise only) or $x=s+n$ (signal plus noise). Then the optimal $r$-dimensional subspace that maximizes $D(n\|s+n)$ or $D(s+n\|n)$ coincides with each other, where $n$ and $s+n$ in the KLD expressions denote the respective distributions of the reduced-dimension sample for the noise only and signal plus noise cases.  \label{cor:signalplusnoise}
\end{corollary}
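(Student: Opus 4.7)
The plan is to reduce the corollary to a direct application of Theorem~\ref{thm:order}. Since both $n$ and $s+n$ are zero-mean Gaussian, this is precisely an instance of the equal-mean case: the two distributions in question are $\mathcal{N}(0,\Sigma_n)$ and $\mathcal{N}(0,\Sigma_s+\Sigma_n)$, where independence of $s$ and $n$ gives the covariance of $s+n$. By Theorem~\ref{thm:order}, it suffices to verify that all generalized eigenvalues of the matrix pair $(\Sigma_s+\Sigma_n,\Sigma_n)$ lie on the same side of $1$.

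The key step is to examine the generalized eigenvalue equation
\[
(\Sigma_s+\Sigma_n)v = \lambda\, \Sigma_n v,
\]
which I would rewrite as $\Sigma_s v = (\lambda-1)\Sigma_n v$. Left-multiplying by $v^T$ yields $v^T\Sigma_s v = (\lambda-1)\, v^T\Sigma_n v$. Since $\Sigma_s \succeq 0$ (as a covariance matrix) and $\Sigma_n \succ 0$ (assumed full rank throughout the paper), the left-hand side is nonnegative and the coefficient $v^T\Sigma_n v$ on the right is strictly positive. Hence $\lambda \geq 1$ for every generalized eigenvalue, which places us squarely in case~1 of the proof of Theorem~\ref{thm:order}.

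Applying Theorem~\ref{thm:order} then gives the conclusion: the optimal $r$-dimensional subspace for maximizing $D(q_1\|q_2)$ (where $q_1$ is the projected noise-only distribution and $q_2$ the projected signal-plus-noise distribution) is spanned by the generalized eigenvectors corresponding to the $r$ largest eigenvalues $\lambda_i$ of $(\Sigma_s+\Sigma_n,\Sigma_n)$, and by the reciprocal/reversal identity established in the theorem, this is the very same set of eigenvectors chosen when $D(q_2\|q_1)$ is maximized (which picks the $r$ smallest eigenvalues $\lambda_i' = 1/\lambda_i \leq 1$ of the swapped pair). The two subspaces therefore coincide.

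There is essentially no hard step; the only thing to be careful about is the direction of the inequality $v^T\Sigma_s v \geq 0$ and its dependence on $\Sigma_s$ being a bona fide covariance matrix (rather than an arbitrary symmetric matrix), together with the standing full-rank assumption on $\Sigma_n$ that keeps $v^T\Sigma_n v$ strictly positive and rules out the degenerate possibility of an eigenvalue at infinity. Once those are noted, the corollary follows immediately and cleanly from Theorem~\ref{thm:order}.
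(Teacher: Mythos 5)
Your proof is correct and follows essentially the same route as the paper's: both reduce the corollary to the equal-mean theorem on the ordering of the KLD by observing that the relevant covariance pair is $(\Sigma_s+\Sigma_n,\Sigma_n)$ and that $\Sigma_s+\Sigma_n\succeq\Sigma_n$ forces all generalized eigenvalues to be at least $1$. Your explicit quadratic-form verification $v^T\Sigma_s v=(\lambda-1)v^T\Sigma_n v\geq 0$ is a slightly more detailed justification of the eigenvalue bound that the paper simply asserts.
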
 
\begin{proof}
    This is a direct consequence of Theorem~\ref{lem:generalized}. The two distributions, noise only and signal plus noise are respectively $\mathcal{N}(0,\Sigma_n)$ and $\mathcal{N}(0,\Sigma_s+\Sigma_n)$. Since $\Sigma_s+\Sigma_n\succeq \Sigma_n$, where $A\succeq B$ means $A-B$ is positive semi-definite, the generalized eigenvalues of the matrix pair $(\Sigma_s+\Sigma_n,\Sigma_n)$ are all greater than or equal to $1$. 
\end{proof}

\section{Linear Projection for the General Case \label{sec:new}}

We now propose two new linear projection approaches for the general Gaussian model, i.e., the Gaussian distributions do not share a common mean or covariance matrix. The two approaches cater to different regimes in which the discriminative information is dominated by either the mean difference of the covariance matrix difference. The first one, developed for the so-called large-$\mu$ regime, is motivated by the fact that LDA, which is a one-dimensional linear projection, preserves the entire KLD for the equal covariance case with two classes. The second approach, developed for the so-called small-$\mu$ regime, relies on the additivity of the KLD with independent observations.
\subsection{The Large $\mu$ Regime}

A Gaussian distribution is completely specified by the first and second-order moments. 
A close examination of (\ref{eq:kld}) reveals that the contribution of the first and second-order moments can be loosely grouped into the following two terms: 
\begin{eqnarray} 
    D_\mu(p_1\|p_2)&=&\frac12 (\mu_2-\mu_1)^T\Sigma_2^{-1}(\mu_2-\mu_1),\label{eq:kldmu}\\ D_\Sigma(p_1\|p_2)&=&\frac12 \left[\ln \frac{|\Sigma_2|}{|\Sigma_1|} - d + \mbox{tr}(\Sigma_2^{-1}\Sigma_1)\right]. \label{eq:kldSigma} 
\end{eqnarray}
While the first term (\ref{eq:kldmu}) depends on the mean vectors $\mu_1, \mu_2$ and the covariance matrix $\Sigma_2$, the second term is only a function of the two covariance matrices. This motivates the two proposed approaches for the general case depending on whether or not $D_\mu(p_1\|p_2)$ dominates $D_\Sigma(p_1\|p_2)$. 

When the mean difference is large and the contribution from $D_\mu(p_1\|p_2)$ in (\ref{eq:kldmu}) is significant, a sensible approach is to find a subspace that completely preserves $D_\mu(p_1\|p_2)$ while retaining as much as possible the contributions due to the difference in covariance matrices, i.e., the term (\ref{eq:kldSigma}). Comparing (\ref{eq:kldequalSigma}) and (\ref{eq:kldmu}) and in light of Lemma~\ref{lem:LDA}, it is apparent that with two classes, a single dimension along $a=\Sigma_2^{-1}(\mu_2-\mu_1)$ is sufficient to retain the entire $D_\mu(p_1\|p_2)$ in (\ref{eq:kldmu}). This motivates the following algorithm for SDR under the so-called large-$\mu$ regime, i.e., when the contribution of $D_\mu$ is significant.

\begin{algorithm}
    Given $1\leq r< d$,

\begin{enumerate}
    \item Let $a_1=\Sigma_2^{-1}(\mu_2-\mu_1)$.
    \item For $a_2$ through $a_r$, choose the generalized eigenvectors for the matrix pair $(\Sigma_2,\Sigma_1)$ such that the corresponding generalized values maximize $g(\lambda)$ in (\ref{eq:g}). 
    \item The $r\times d$ linear projection matrix $A$ is defined as $A=[a_1,\cdots,a_r]^T$.
\end{enumerate}
\end{algorithm}
In the special case of equal covariance matrix, the contribution from (\ref{eq:kldSigma}) is $0$ and a single dimension along $a_1$ is sufficient. The direction $a_1$ coincides with the LDA with two classes (see Lemma~\ref{lem:LDA}).
Note that such a defined $A$ matrix is not necessarily orthonormal. QR factorization \cite{Golub:book} can be applied to find a set of basis vectors corresponding to the subspace spanned by $[a_1,\cdots,a_r]$. Projecting onto orthonormal basis of the subspace is especially desirable for visualization and exploratory data analysis.

\subsection{The Small $\mu$ Regime}

If the contribution from (\ref{eq:kldmu}) is not significant compared with that of (\ref{eq:kldSigma}), an alternative approach can be developed that takes advantage of the additivity of the KLD with independent observations. 

\begin{theorem} 
Let the eigenvalues and eigenvectors for the matrix $\tilde\Sigma=\Sigma_1^{-\frac12}\Sigma_2\Sigma_1^{-\frac12}$ be $(\lambda_i,u_i)$, $i=1,\cdots,d$. Define $\mu=\Sigma_1^{-\frac12}(\mu_2-\mu_1)$. Then the KLD between $p_1\sim \mathcal{N}(\mu_1,\Sigma_1)$ and $p_2\sim\mathcal{N}(\mu_2,\Sigma_2)$ can be decomposed into the sum of $d$ KLDs between univariate Gaussian distributions $\mathcal{N}(0,1)$ and $\mathcal{N}(u_i^T \mu,\lambda_i)$, i.e.,
\[
D(p_1\|p_2)=\sum_{i=1}^d D(\mathcal{N}(0,1)\|\mathcal{N}(u_i^T \mu,\lambda_i)).
\]
Each KLD term in the summation is attained by projecting the sample $\Sigma_1^{-\frac12} (x-\mu_1)$ into the one-dimensional subspace defined by the corresponding eigenvector $u_i$. \label{thm:additivity}
\end{theorem}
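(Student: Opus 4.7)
The plan is to convert the multivariate KLD on the left-hand side into a sum of univariate KLDs by two successive \emph{invertible} linear transformations, after which the additivity of KLD across independent components of a product measure finishes the job. The key fact I will use repeatedly is that invertible linear maps preserve KLD, which follows from the data processing inequality applied in both directions (the same reasoning used just after Lemma~\ref{lem:matrix} to reduce $A=TB$ to $B$).

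First, I would apply the whitening transformation $y = \Sigma_1^{-1/2}(x-\mu_1)$. This sends $p_1$ to $p_1' \sim \mathcal{N}(0, I)$ and $p_2$ to $p_2' \sim \mathcal{N}(\mu, \tilde\Sigma)$ with $\mu$ and $\tilde\Sigma$ exactly as defined in the theorem statement, and since the map is invertible, $D(p_1\|p_2) = D(p_1'\|p_2')$. Next, I would apply the orthogonal transformation $z = U^T y$, where $U = [u_1, \ldots, u_d]$ collects the eigenvectors of $\tilde\Sigma$. Because $U$ is orthogonal, $\mathcal{N}(0, I)$ is invariant under this rotation, while $\mathcal{N}(\mu, \tilde\Sigma)$ becomes $\mathcal{N}(U^T\mu, \Lambda)$ with $\Lambda = \mathrm{diag}(\lambda_1,\ldots,\lambda_d)$. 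Again KLD is preserved, so it suffices to compute the KLD between $\mathcal{N}(0, I)$ and $\mathcal{N}(U^T\mu, \Lambda)$.

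Since $\Lambda$ is diagonal and both covariance matrices are diagonal, the coordinates $z_i$ are mutually independent under each of the two distributions: under the first, $z_i \sim \mathcal{N}(0,1)$ independently, and under the second, $z_i \sim \mathcal{N}(u_i^T\mu, \lambda_i)$ independently. The KLD between two product measures equals the sum of the marginal KLDs (immediate from taking the logarithm of a product density inside the expectation), so the $d$-dimensional KLD collapses to $\sum_{i=1}^d D(\mathcal{N}(0,1) \,\|\, \mathcal{N}(u_i^T\mu, \lambda_i))$, as claimed. The attainment statement is then automatic: $z_i = u_i^T y = u_i^T \Sigma_1^{-1/2}(x - \mu_1)$, so projecting the whitened sample onto $u_i$ produces a scalar whose two distributions are precisely $\mathcal{N}(0,1)$ and $\mathcal{N}(u_i^T\mu, \lambda_i)$.

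I do not foresee a real obstacle here; each of the three ingredients (KLD invariance under invertible maps, rotational invariance of the standard Gaussian, additivity of KLD for product measures) is a one-line computation. The only piece requiring a moment of care is bookkeeping the indexing so that the eigenvector $u_i$ and eigenvalue $\lambda_i$ appear together in the $i$-th marginal, which is ensured by using the same $U$ in both the rotation and the subsequent projection.
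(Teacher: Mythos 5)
Your argument is correct and is essentially identical to the paper's own proof: whiten with $\Sigma_1^{-1/2}(x-\mu_1)$, rotate by $U^T$ to diagonalize $\tilde\Sigma$, invoke invertibility of both maps to preserve the KLD, and then apply additivity of KLD over the now-independent coordinates. No gaps; the bookkeeping of $(\lambda_i, u_i)$ pairs is handled the same way in the paper.
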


\begin{proof}
    The additivity property of the KLD states that for $n$ independently distributed random variables $x_1,\cdots,x_n$ whose distributions follow either
\[
p(x_1,\cdots,x_n)=\prod_{i=1}^n p_i(x_i)
\]
or 
\[
q(x_1,\cdots,x_n)=\prod_{i=1}^n q_i(x_i),
\]
the KLD between $p$ and $q$ equals the sum of individual KLDs, i.e., 
\[
D(p(x_1,\cdots,x_n)\|q(x_1,\cdots,x_n) = \sum_{i=1}^n D(p_i(x_i)\|q_i(x_i)).
\]
Consider a random sample $x\in R^d$ distributed according to either $\mathcal{N}(\mu_1,\Sigma_1)$ or $\mathcal{N}(\mu_2,\Sigma_2)$. Define $y=\Sigma_1^{-\frac12} (x-\mu_1)$, then $y$ follows either
\[
q_1\sim \Nmat(0,I) \quad \mbox{ or } \quad 
q_2 \sim \Nmat(\mu,\tilde\Sigma),
\]
where $\mu=\Sigma_1^{-\frac12}(\mu_2-\mu_1)$ and $\tilde\Sigma=\Sigma_1^{-\frac12}\Sigma_2\Sigma_1^{-\frac12}$. Let the eigen decomposition of $\tilde\Sigma$ be $\tilde\Sigma = U\Lambda U^T$ with $\Lambda=\mbox{diag}(\lambda_1,\cdots,\lambda_d)$ and $U=[u_1,\cdots,u_d]$. Define $z=U^Ty$, then $z$ follows either 
\[
r_1\sim \Nmat(0,I) \quad \mbox{ or } \quad 
r_2 \sim \Nmat(U^T\mu,\Lambda).
\]
Therefore, $z$ is an independent Gaussian vector under either $r_1$ or $r_2$.
The KLD between $r_1$ and $r_2$ can thus be decomposed into the sum of $d$ KLDs between univariate normal distributions $\mathcal{N}(0,1)$ and $\mathcal{N}(u_i^T \mu,\lambda_i)$, i.e.,
\begin{equation}
D(r_1\|r_2)=\sum_{i=1}^d D(\mathcal{N}(0,1)\|\mathcal{N}(u_i^T \mu,\lambda_i)). \label{eq:KLDsum}
\end{equation}
Since the transformations $y=\Sigma_1^{-\frac12} (x-\mu_1)$ and $z=U^Ty$ are both invertible given that $\Sigma_1$ and $U$ are both full rank, $D(p_1\|p_2)=D(q_1\|q_2)=D(r_1\|r_2)$.
\end{proof}

A natural way to pick a rank-$r$, $r<d$, subspace is to rank order the $d$ rank-$1$ subspaces defined by $u_i$'s according to $D(\mathcal{N}(0,1)\|\mathcal{N}(u_i^T \mu,\lambda_i)$ and choose the first $r$ eigenvectors, which corresponding to the largest $r$ terms in (\ref{eq:KLDsum}). 
If $\mu_1=\mu_2$, the solution reduces to the optimal solution described in Lemma~\ref{lem:generalized}. Additionally, for the general case with unequal means and covariance matrices, the KLD of the projected subspace is guaranteed to approach the full KLD of the original distributions as $r$ approaches $d$.

The detailed steps for the proposed algorithm are as follows.
\begin{algorithm}
  Given $1\leq r< d$,

\begin{enumerate}
    
    \item Compute $\mu=\Sigma_1^{-\frac12}(\mu_2-\mu_1)$.
    \item Compute $(\lambda_i,u_i)$, $i=1,\cdots,d$, the eigenvalue/eigenvector pairs for $\Sigma_1^{-\frac12}\Sigma_2\Sigma_1^{-\frac12}$.
    \item Compute, for $i=1,\cdots,d$, the $d$ component KLD:
    \[
     D(\mathcal{N}(0,1)\|\mathcal{N}(u_i^T \mu,\lambda_i))=\frac12\left(\ln \lambda_i - 1 +\frac{1+|u_i^T\mu|^2}{\lambda_i}\right).
    \]
    \item Rank order $(\lambda_i,u_i)$ according to $D(\mathcal{N}(0,1)\|\mathcal{N}(u_i^T \mu,\lambda_i))$.
    \item The projection matrix, applied to the transformed sample $y=\Sigma_1^{-\frac12}(x-\mu_1)$, is constructed by $r$ eigenvectors $u_i$'s corresponding to the largest $D(\mathcal{N}(0,1)\|\mathcal{N}(u_i^T \mu,\lambda_i))$.
\end{enumerate}
\end{algorithm}

The $r$-dimensional subspace found using Algorithm 2 is limited to that spanned by the eigenvectors of the matrix $\tilde{\Sigma}$. This is the consequence of decomposing the KLD in the form of (\ref{eq:KLDsum}). Such a decomposition does not take into account the actual mean separation even though the individual KLD depends on the mean difference vector. Therefore, the algorithm does not necessarily find the optimal subspace for any given $r$. 

With Algorithm 2, the whitening transformation with respect to $p_1$ puts the samples in an isotropic position if the distribution is indeed $p_1$. Projection onto $R^r$ using the constructed projection matrix maintains the isotropic position in $R^r$ since the eigenvectors of $\tilde\Sigma$ are orthonormal to each other. 

Note that for the special case of equal covariance matrix, i.e., $\Sigma_1=\Sigma_2=\Sigma$, the matrix $\Sigma_1^{-\frac12}\Sigma_2\Sigma_1^{-\frac12}$ is an identity matrix thus all its eigenvalues are $1$. In this case, any orthonormal basis can be the set of generalized eigenvectors. If we choose the first eigenvector to be precisely $\mu/|\mu|$, then the first component KLD in Step 3 simplifies to 
\[
\frac12\mu^T\mu=\frac12 (\mu_2-\mu_1)^T \Sigma^{-1} (\mu_2-\mu_1).
\]
This completely recovers the KLD for this special case. We note here that when implementing this algorithm, standard routines will typically return the computational basis as the eigenvectors instead of $\mu/|\mu|$ when the two covariance matrices are identical. 

\subsection{Determination of Large-$\mu$ or Small-$\mu$ Regimes}

Algorithms 1 and 2 are developed under two different regimes: Algorithm 1 is suitable for data when the mean separation is large relative to separation in any direction of the variances of the two classes. Algorithm 2 on the other hand is developed when the mean separation does not dominate the separation of covariance matrices along any direction. 

For a given $r<d$, i.e., the projected subspace is of dimension $r$, preserving the KLD term due to the mean difference (\ref{eq:kldmu}) requires only a single dimension. Thus a simple rule of thumb is to choose Algorithm 1 when 
\[
D_\mu \geq \frac{1}{r-1} D_\Sigma,
\]
and choose Algorithm 2 otherwise. We note that Algorithm 2 still captures information contributed by the mean difference in each step. When dealing with real data, cross-validation can again be used when selecting which algorithm is better for a given dataset. 

The difference between the two algorithms in terms of achieved KLD is more pronounced with small $r$. In fact, as $r$ becomes large, their performance, i.e., the recovered KLD, becomes close to each other. This is not surprising - supervised dimension reduction is useful when there is a concentration of discriminative information in a low-dimensional subspace. The fact that both algorithms choose projection directions in some descending order in recovered KLD suggests that when $t$ approaches $d$, both algorithms will recoup nearly the entire KLD of the original sample.

\section{$K$ classes \label{sec:K}}

For the case with $K>2$ classes of Gaussian populations, pairwise divergence measures can be used to search for low-dimensional representations for the classification problem. Both algorithms can be extended to the general $K>2$ case - ranking the subspaces can take different variations if classes are associated with different priorities. 

A special case that admits a concrete solution is when all the Gaussian distributions under the $K$ classes share the same covariance matrix, say, $\Sigma$\cite{Vogelstein:21}. For this special case, multiclass LDA, originally proposed in \cite{Rao:48}, is a popular means for supervised dimension reduction as it maximizes the overall mean separation along $K-1$ directions. Specifically, define
    \begin{align}
        \bar\mu&=\frac1K \sum_{k=1}^K\mu_k,\\
        S_\mu&=\sum_{k=1}^K (\mu_k-\bar\mu)(\mu_k-\bar\mu)^T. \label{eq:Smu}
    \end{align}
Multiclass LDA finds $K-1$ directions that maximize the between-class variance relative to the common variance: 
\begin{align}
    \max_{w} \frac{w^TS_\mu w}{w^T \Sigma w}. \label{eq:Kclass}
\end{align}
With $S_\mu$ being a rank-$(K-1)$ matrix, the solutions of $w$ are the $K-1$ generalized eigenvectors of the matrix pair $(S_\mu,\Sigma)$ corresponding to the $K-1$ non-zero generalized eigenvalues. Denote by those vectors $w_k$, $k=1,\cdots,K-1$, the $(K-1)$-dimensional subspace obtained by multiclass LDA is therefore
\begin{align}
\mathcal{S}_{LDA}={\rm Span}(w_1,\cdots,w_{K-1}). \label{eq:SLDA}
\end{align}

On the other hand, the pairwise KLD with common covariance matrix between classes $i$ and $j$ is simply 
\[
D(p_i\|p_j)=\frac12 (\mu_i-\mu_j)^T\Sigma^{-1}(\mu_i-\mu_j).
\]
We have the following simple result which states that the subspace $\mathcal{S}_{LDA}$ preserves all pairwise KLD.

\begin{theorem}
Multi-class LDA preserves the pairwise KLD of the original samples. \label{thm:K}
\end{theorem}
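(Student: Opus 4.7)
The plan is to reduce to whitened coordinates, where the pairwise KLD collapses to a squared Euclidean distance between transformed means, and to show that the multi-class LDA subspace contains every mean-difference vector, so that the corresponding projection acts as an isometry on those differences.

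First I would apply the whitening map $\tilde{x} = \Sigma^{-1/2}x$. Being an invertible linear transformation, it preserves KLD (two applications of the DPI, forward with $\Sigma^{-1/2}$ and backward with $\Sigma^{1/2}$). In whitened coordinates each class becomes $\tilde p_k \sim \mathcal{N}(\tilde\mu_k, I)$ with $\tilde\mu_k = \Sigma^{-1/2}\mu_k$, and the pairwise KLD in (\ref{eq:kldequalSigma}) collapses to $\tfrac{1}{2}\|\tilde\mu_i-\tilde\mu_j\|^2$. The between-class scatter becomes $\tilde S_\mu = \Sigma^{-1/2} S_\mu \Sigma^{-1/2} = \sum_k (\tilde\mu_k-\bar{\tilde\mu})(\tilde\mu_k-\bar{\tilde\mu})^T$, and the substitution $\tilde w = \Sigma^{1/2} w$ turns the LDA ratio (\ref{eq:Kclass}) into an ordinary Rayleigh quotient $\tilde w^T \tilde S_\mu \tilde w / \tilde w^T \tilde w$. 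Its optimizers are eigenvectors of $\tilde S_\mu$; since $\tilde S_\mu$ has rank at most $K-1$, the eigenvectors $\{\tilde w_k\}_{k=1}^{K-1}$ associated with the nonzero eigenvalues can be chosen orthonormal and they span $\mathrm{Col}(\tilde S_\mu)$.

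The central algebraic observation is then that $\mathrm{Col}(\tilde S_\mu) = \mathrm{Span}\{\tilde\mu_k - \bar{\tilde\mu}: 1 \le k \le K\}$, and every pairwise difference decomposes as $\tilde\mu_i - \tilde\mu_j = (\tilde\mu_i - \bar{\tilde\mu}) - (\tilde\mu_j - \bar{\tilde\mu})$, hence lies inside this span. Because $\{\tilde w_k\}$ is orthonormal in the whitened coordinates, the projection $\tilde W = [\tilde w_1,\ldots,\tilde w_{K-1}]^T$ is an isometry on its row space, so $\|\tilde W(\tilde\mu_i-\tilde\mu_j)\|^2 = \|\tilde\mu_i-\tilde\mu_j\|^2$ for every pair $(i,j)$. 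This is precisely preservation of the whitened pairwise KLD, and by invariance under the whitening map, of the original pairwise KLD as well.

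The main obstacle I anticipate is administrative rather than conceptual: matching the whitened eigenvectors $\tilde w_k$ back to the generalized eigenvectors $w_k = \Sigma^{-1/2}\tilde w_k$ of $(S_\mu,\Sigma)$ in (\ref{eq:SLDA}), and accounting for the fact that the $w_k$ need not be orthonormal in the original inner product. Lemma~\ref{lem:matrix} disposes of this cleanly, since the projected KLD depends only on the row space of the projection matrix, so one may freely replace $W=[w_1,\ldots,w_{K-1}]^T$ by any orthonormal basis of the same row space — and composing that replacement with $\Sigma^{-1/2}$ yields exactly $\tilde W$, closing the loop.
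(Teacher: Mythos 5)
Your proof is correct and follows essentially the same route as the paper's: whiten with $\Sigma^{-1/2}$, identify the LDA subspace with the column space of the whitened between-class scatter $\Sigma^{-1/2}S_\mu\Sigma^{-1/2}$, note that this column space contains every whitened mean difference, and conclude that the orthonormal projection preserves each pairwise KLD $\tfrac12\|\tilde\mu_i-\tilde\mu_j\|^2$. The only organizational difference is that the paper first constructs an explicit KLD-preserving subspace spanned by the vectors $\Sigma^{-1}(\mu_{k+1}-\mu_1)$ via Lemma~\ref{lem:LDA} and then separately shows it coincides with $\mathcal{S}_{LDA}$, whereas you establish the isometry on the LDA subspace directly; both arguments are sound.
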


\begin{proof} The proof consists of two steps. In the first step, we construct a $(K-1)$-dimensional subspace and prove that it is KLD-preserving, i.e., the sample projected into this subspace preserves pairwise KLD of the original sample. The second step establishes that the constructed subspace is equivalent to the multiclass KLD in which the $(K-1)$-dimensional subspace is found by solving the maximization problem (\ref{eq:Kclass}).   

\noindent \underline{Step 1} 

Define, for $k=1,\cdots,K-1$, 
     \begin{align}
         a_k=\Sigma^{-1}(\mu_{k+1}-\mu_1).
    \end{align}
Furthermore, define $\mathcal{S}_{KLD}={\rm Span}(a_1,\cdots,a_{K-1})$, the space spanned by $(a_1,\cdots,a_{K-1})$. From Lemma~\ref{lem:LDA}, it suffices to show that for any $i\neq j$, $1\leq i,j\leq K$, $\Sigma^{-1}(\mu_i-\mu_j)\in \mathcal{S}_{KLD}$ since $\Sigma^{-1}(\mu_i-\mu_j)$ preserves $D(p_i\|p_j)$. 

The case when either $i=1$ or $j=1$ is trivial given Lemma~\ref{lem:LDA} and the definition of $a_k$. If $i\neq 1$ and $j\neq 1$, then both $a_i$ and $a_j$ lie in the subspace $\mathcal{S}_{KLD}$, so does $a_i-a_j$. But 
\[
\Sigma^{-1}(\mu_i-\mu_j)=\Sigma^{-1}(\mu_i-\mu_1)-\Sigma^{-1}(\mu_j-\mu_1)=a_i-a_j.
\]
Therefore, $\Sigma^{-1}(\mu_i-\mu_j)\in\mathcal{S}_{KLD}$.

\noindent \underline{Step 2} 

 We now establish that the above-defined subspace coincides with the subspace obtained through multi-class LDA, i.e.,
\begin{align}
    \mathcal{S}_{KLD}=\mathcal{S}_{LDA},
     \label{eq:LDAKLD}
\end{align}
where $\mathcal{S}_{LDA}$ is defined in (\ref{eq:SLDA}).

With $\Sigma$ being a full-rank matrix, let us define $y=\Sigma^{-\frac12}x$. Then the distributions for $K$ classes with sample $y$ are respectively $\mathcal{N}(\gamma_k,I)$, for $k=1,\cdots,K$, where $\gamma_k\triangleq \Sigma^{-\frac12}\mu_k$. 
Define 
\begin{align}
    S_\gamma = \sum_{k=1}^{K} (\gamma_k-\bar\gamma)(\gamma_k-\bar\gamma)^T,
\end{align}
where 
\[
\bar\gamma = \frac1K \sum_{k=1}^K \gamma_k.
\]

From the definition of $S_\mu$ in (\ref{eq:Smu}), we have $S_\gamma=\Sigma^{-\frac12}S_\mu \Sigma^{-\frac12}$.  Define $v_k\triangleq\Sigma^{\frac12}w_k$, for $k=1,\cdots, K-1$.
Since $(\lambda_k,w_k)$, $k=1,\cdots,K-1$, are the set of generalized eigenvalues/eigenvectors for the matrix pair $(S_\mu,\Sigma)$ corresponding to the nonzero $\lambda_k$'s,   $(\lambda_k,v_k)$'s are the set of eigenvalue/eigenvector pairs corresponding to the non-zero eigenvalues of $S_\gamma$.

Define $b_k=\gamma_{k+1}-\gamma_1$, $k=1,\cdots,K-1$, and with the definitions of $\gamma_k$ and $a_k$, we have $b_k=\Sigma^\frac12 a_k$, $k=1,\cdots,K-1$.  Proving (\ref{eq:LDAKLD}) is equivalent to proving
\begin{align}
    {\rm Span}(b_1,\cdots,b_{K-1})={\rm Span}(v_1,\cdots,v_{K-1}). \label{eq:colspace}
\end{align}

From the definition of $S_\gamma$, the set of vectors $(\gamma_k-\bar\gamma)$, $k=1,\cdots,K-1$, span the column space of $S_{\gamma}$. In fact, given $\sum_{k=1}^K (\gamma_k-\bar{\gamma}) = 0$, the column space $S_\gamma$ is spanned by $(\gamma_k-\bar{\gamma})$, $k=2,\cdots, K$. But for $k=2,\cdots, K$, 
\begin{align*}
    \gamma_k-\bar{\gamma}&= \gamma_k-\gamma_1 - \frac1K \sum_{l=1}^K (\gamma_l-\gamma_1) \\
    &= b_{k-1}-\frac1K \sum_{l=2}^K b_{l-1}.
\end{align*}
Therefore the column space of $S_\gamma$ is also spanned by the set of vectors $(b_1,\cdots,b_{K-1})$.
Then equation (\ref{eq:colspace}) is established since  the eigenvectors $(v_1,\cdots,v_{K-1})$ also span the column space of the matrix $S_\gamma$
\end{proof}
If the objective is to preserve pairwise KLD, then with the subspace obtained in Theorem~\ref{thm:K}, there is no need to add additional dimensions. That is, for $K$ classes of Gaussian populations with equal covariance matrix, a $(K-1)$-dimensional subspace is sufficient to completely retain pairwise KLD of the original samples. The same statement applies to pairwise CI as the two metrics are equivalent when all classes share a common covariance matrix.

\section{Numerical Experiments \label{sec:numerical}}
Two sets of numerical experiments are conducted. The first one uses randomly generated {\em Gaussian parameters} to examine the behavior of the proposed linear projection methods for SDR. 
Specifically, we embed a low-dimensional Gaussian signal in a high-dimensional sample space and then project the sample back to a lower-dimensional space. Visualizing the probability density functions of the reduced-dimensional samples and computing the corresponding KLDs for different projection methods provide the reader with a concrete and intuitive way to compare the performance of different SDR approaches. In the second part, actual Gaussian samples are generated for the two classes in the high-dimensional space, and their reduced-dimension representations are used to evaluate the recovered KLD and classification performance. The projection matrix for the second part is computed using the sample statistics obtained with the samples generated in the high-dimensional space. Similarly, the KLD computation in the reduced-dimension subspace uses the sample statistics, i.e., the estimated mean vectors and covariance matrices of the projected samples. 

In addition to Algorithms 1 and 2, we implement the gradient descent algorithm that directly searches for the projection matrix $A$ to maximize (\ref{eq:reducedkld}). The gradient descent algorithm is implemented using Tensorflow's AdamOptimizer class and the learning rate is empirically chosen that exhibits the best convergence property for the given set of parameters. Different from the gradient descent approach used in \cite{Dwivedi:22}, we do not impose an orthogonality constraint on the projection matrix after each iteration. From Lemma~\ref{lem:matrix}, any matrix spanning the same row space achieves the same KLD, and if orthogonality is desired for the final projection matrix, QR factorization can be applied to the converging matrix $A$. Our extensive experiments suggest that the most critical issue affecting the performance of the gradient descent algorithm is its initialization. As it turns out, for all the cases that we have tested, random initialization never outperforms the case when we use the solution to either Algorithm 1 or Algorithm 2 as the initial matrix, indicating that either or both of the proposed algorithms are potentially close to the global optimum, as least in the cases that we have tested.

In the second part, in addition to Algorithms 1 and 2, we also include LoL \cite{Vogelstein:21} in our evaluation. The LoL was proposed by assuming an equal covariance matrix between the two classes. When applying to the case with unequal covariance matrices, the entire samples, after re-centering using the respective sample means, are used to estimate a `common' covariance matrix. While we have established that if the samples for the two classes share a common covariance matrix, projection onto a one-dimensional subspace along the LDA direction suffices to recover the entire KLD, i.e., adding additional dimensions does not provide any benefit for discriminating the two classes, with unequal covariance matrices, adding additional dimensions will improve the performance since in this case a single dimension is inadequate to capture the entire KLD between the two classes.    

For both parts, the first using only distribution parameters and the second using randomly generated samples, the final projected subspace is in $R^2$. This enables the visualization of either the density function or the sample scatter plot.

\subsection{Density separation}
Consider a situation when a random signal $s\in R^t$ takes one of two possible Gaussian distributions. Specifically,  the $t\times 1$ signal $s$ is  distributed according to either
\[
p_{s1}\sim \mathcal{N}(\mu_1,\Sigma_1), \quad \mbox{ or } \quad p_{s2}\sim \mathcal{N}(\mu_2,\Sigma_2).
\]
The signal is not directly observable but is embedded in $R^d$ through a linear Gaussian channel, resulting in the sample $x\in R^d$. 
\[
x=Hs+z
\]
where $H$ is a $d\times t$ full column rank channel matrix and $z\sim \mathcal{N}(0,\sigma^2 I_d)$. The signal $s$ and noise $z$ are independent of each other. Thus $x$ is distributed according to either 
\begin{align}
    p_{x1}\sim \mathcal{N}(H\mu_1,H\Sigma_1H^T+\sigma^2 I_d), \quad \mbox{ or } \quad p_{x2}\sim \mathcal{N}(H\mu_2,H\Sigma_2H^T+\sigma^2 I_d). \label{eq:x}
\end{align}

Dimension reduction on $x$ would then bring $x$ back to $R^r$ for some $r<d$, giving rise to an $r$-dimensional sample $y$. Let the $r\times d$ linear projection matrix be $A$ and thus $y=Ax$. Then $y$ is distributed according to either 
\begin{align*}
    p_{y1}&\sim \mathcal{N}(AH\mu_1,A(H\Sigma_1H^T+\sigma^2 I_d)A^T)\mbox{ or} \\ p_{y2}&\sim \mathcal{N}(AH\mu_2,A(H\Sigma_2H^T+\sigma^2 I_d)A^T).
\end{align*} 

With the distribution parameters of $x$ in (\ref{eq:x}), Algorithms 1 and 2, along with the gradient descent algorithm, are used to find the linear projection from $R^d$ to $R^r$. We evaluate the dimension reduction performance through visualization of the projected density functions in $R^r$ and the attained KLD by the different projection algorithms. For visualization, we will restrict $r=2$ whereas for KLD evaluation, we compute the recovered KLD as a function of varying $r$. 

We now define the three KLDs below, each calculated using (\ref{eq:kld}) with the corresponding statistics. 
\begin{align}
    D_s&=D(p_{s1}\|p_{s2}),  \label{eq:KLDs}\\
    D_x&=D(p_{x1}\|p_{x2}). \label{eq:KLDx}\\
    D_y&=D(p_{y1}\|p_{y2})  \label{eq:KLDy}
\end{align}
From the DPI for KLD, $D_y\leq D_x \leq D_s$.

In our experiment, we choose $t=10$ and $d=100$. For visualization, we set $r=2$ while for KLD evaluation, $r$ varies from $1$ to $10$. 
The Gaussian parameters for $s$ are randomly generated while enforcing the positive definiteness of the covariance matrices. Two different sets of parameters are selected, one corresponding to the large-$\mu$ regime while the other to the small-$\mu$ regime. The determination of the regime is based on the values of $D_\mu$ and $D_\Sigma$ in (\ref{eq:kldmu}) and (\ref{eq:kldSigma}) computed using the statistics for the high dimensional sample $x$ under the two distributions.  For this example, we set the noise variance to be $1$ and the $(D_\mu,D_\Sigma)$ pair for $x\in R^d$ takes the values $(778.4,168.2)$ for the large-$\mu$ case and $(1.7,220.8)$ for the small-$\mu$ case. 

\begin{figure}[h!]
    \centering
    \begin{tabular}{cccc}
    \includegraphics[width=1.5in, height=1.3in]{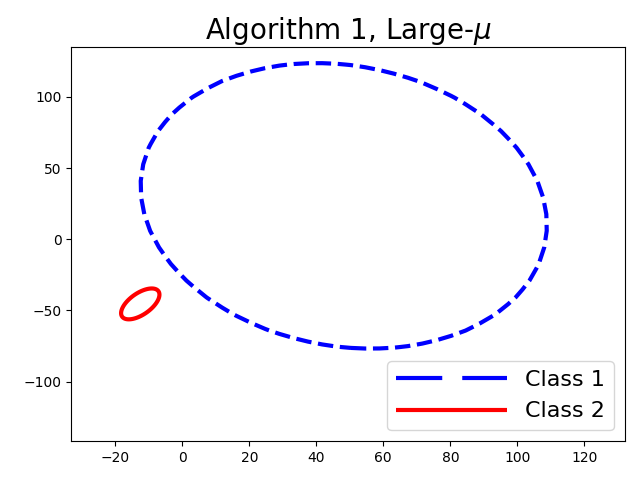}&
    \includegraphics[width=1.5in, height=1.3in]{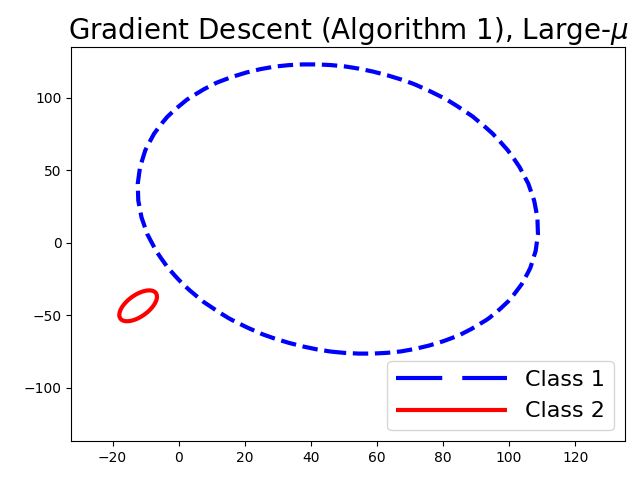}&
    \includegraphics[width=1.5in, height=1.3in]{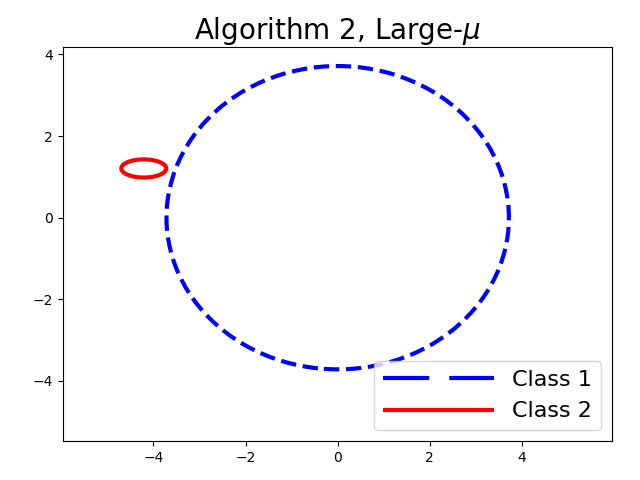}&
        \includegraphics[width=1.5in, height=1.3in]{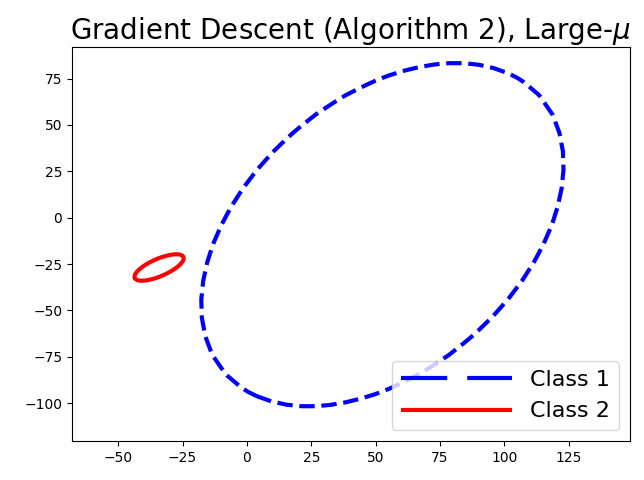}\\
   (a) & (b) &(c) & (d)
    \end{tabular}
    \caption{Density plots of the projected samples using the two proposed algorithms for the large-$\mu$ case: (a) Algorithm 1; (b) Gradient descent using Algorithm 1 as initialization; (c) Algorithm 2; and (d) Gradient descent using Algorithm 2 as initialization. The contour location is chosen at $1/1000$ of the peak density value. }
    \label{fig:densitylargemu}
\end{figure}

\begin{figure}[h!]
    \centering
    \begin{tabular}{cccc}
    \includegraphics[width=1.5in, height=1.3in]{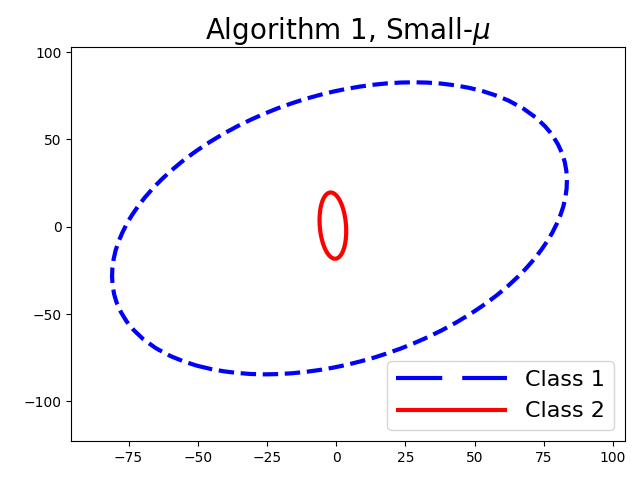}&
    \includegraphics[width=1.5in, height=1.3in]{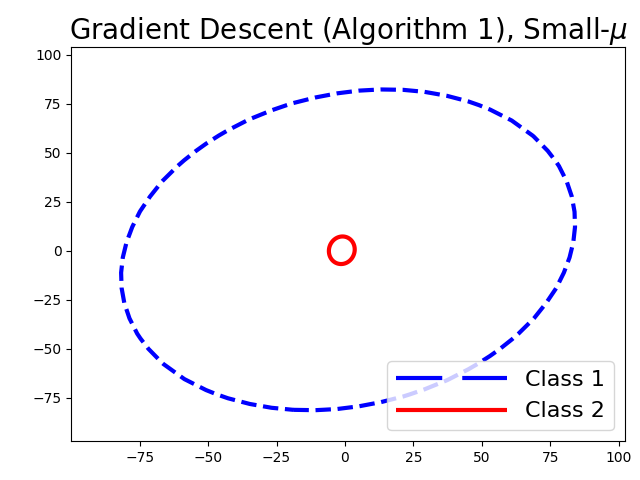}&
    \includegraphics[width=1.5in, height=1.3in]{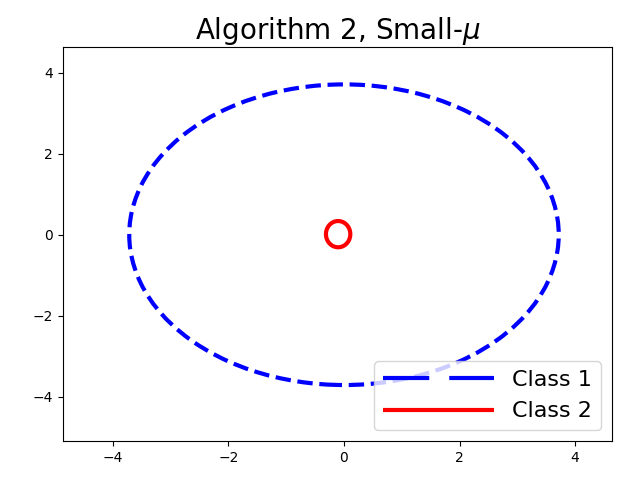}&
        \includegraphics[width=1.5in, height=1.3in]{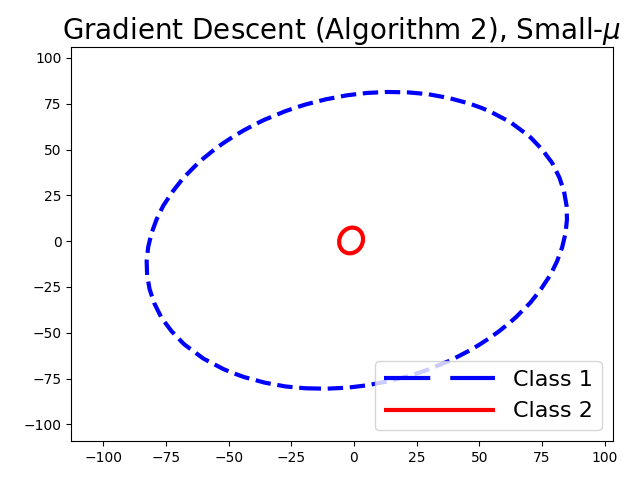}\\
   (a) & (b) &(c) & (d)
    \end{tabular}
    \caption{Density plots of the projected samples using the two proposed algorithms for the small-$\mu$ case: (a) Algorithm 1; (b) Gradient descent using Algorithm 1 as initialization; (c) Algorithm 2; and (d) Gradient descent using Algorithm 2 as initialization. The contour location is chosen at $1/1000$ of the peak density value. }
    \label{fig:densitysmallmu}
\end{figure}

The projected density functions in $R^2$ are plotted in Fig.~\ref{fig:densitylargemu} for the large-$\mu$ case and Fig.~\ref{fig:densitysmallmu} for the small-$\mu$ case. In both cases, along with Algorithms 1 and 2, we plot the results using gradient descent initialized using with matrices found through Algorithms 1 and 2. Some interesting observations are as follows.  
\begin{itemize}
    \item With large-$\mu$, while all algorithms have a sufficient separation between the two classes, Algorithm 1 has a larger separation in the mean vectors because it utilizes its first dimension to capture the entire KLD contributed by the mean difference (notice that the scales of the two axes are different for Figs.~\ref{fig:densitylargemu}(a) and (c)). 
    \item For the small-$\mu$ case, the mean vectors largely coincide with each other - $D_\mu$ is $1.7$ and is much smaller compared with $D_\Sigma=220.8$ in this case. Algorithm 2 is notably better in that the contrast between the two density functions is much more pronounced. More specifically, the projected subspace in $R^2$ using Algorithm 2 results in a more pronounced difference in variances in both axes for the two classes. This is reflected in Fig.~\ref{fig:densitysmallmu}(c) where the class 2 density plot is much smaller relative to class 1 as compared with Fig.~\ref{fig:densitysmallmu}(a). 
    \item For Algorithm 2, the density function is a standard normal for class 1, i.e., its contour is a perfect circle. This is because Algorithm 2 puts class 1 samples in an isotropic position through a whitening transformation.  
    \item With initialization using Algorithm 1, gradient descent leads to density plots that largely track that of Algorithm 1 itself though there seems to be notable improvement for the small-$\mu$ case. 
    \item With initialization using Algorithm 2, gradient descent leads to density plots that seem to be drastically different from the initialization. However, the difference is largely in scaling. In addition, while the initialization using Algorithm 2 has class 1 put in an isotropic position, gradient descent does not enforce such a constraint. A detailed comparison of achieved KLD between the initialization and the convergent point with gradient descent is given below.
    \item For both large-$\mu$ and small-$\mu$ tested here, as well as other examples not reported here, random initialization never outperforms the initialization using either Algorithm 1 or Algorithm 2. While the experiments are not exhaustive, it appears that one or both of the two algorithms results in the projection matrix that is close to the global optimum for the cases that we have tested.

\end{itemize}


Tables~\ref{tab:kldlargemu} and \ref{tab:kldsmallmu} include the recovered KLD as $r$ varies from $1$ to $10$ for the four algorithms: Algorithms 1 and 2 and the corresponding gradient descent algorithms with either of the two used for initialization. 
It is apparent from the tables that while gradient descent does improve the KLD recovery for small $r$, its benefit vanishes as $r$ increases, indicating that Algorithm 1 and/or Algorithm 2 appear to give at least a local optimum. 



\begin{table}[h!]
 {\small    \centering
        \caption{Retained KLD as a function of the dimension of the projected space $r$ for the large-$\mu$ case. }
    \label{tab:kldlargemu}
    \begin{tabular}{|c|c|c|c|c|c|c|c|c|c|c|}
  \hline
        $r$ & $1$ & $2$ & $3$ &  $4$ & $5$ & $6$ & $7$ & $8$ & $9$ & $10$   \\ \hline 
Algorithm 1 &831.740&939.192&942.306&944.295 &945.282&945.795&946.063&946.335&946.557&946.568\\ \hline 
Algorithm 2 &542.069&883.577&915.543&930.837 &939.503&943.017&945.034&946.135&946.470&946.568 \\ \hline 
Gradient descent 1 &834.413 &939.462&942.387&944.390&945.381&945.795&946.063&946.335&946.557&946.568 \\  \hline 
Gradient descent 2 &677.258&939.150&941.771&942.530 &943.013&945.027&945.340&946.329&946.557&946.568 \\  \hline 
    \end{tabular}}
\end{table}

\begin{table}[h!]
{\small    \centering
        \caption{Retained KLD as a function of the dimension of the projected space $r$ for the small-$\mu$ case. }
    \label{tab:kldsmallmu}
    \begin{tabular}{|c|c|c|c|c|c|c|c|c|c|c|}
  \hline
        $r$ & $1$ & $2$ & $3$ &  $4$ & $5$ & $6$ & $7$ & $8$ & $9$ & $10$   \\ \hline 
Algorithm 1 &144.118&160.425&217.176&219.524 &220.502&221.500&222.097&222.441&222.505&222.506\\ \hline 
Algorithm 2 &153.845&216.700&219.040&220.414 &221.429&222.035&222.373&222.491&222.506&222.506\\ \hline 
Gradient descent 1 &153.635 &216.512&218.075&220.416&221.430&222.036&222.154&222.492&222.506&222.506 \\  \hline 
Gradient descent 2 &153.846&216.701&219.041&220.416 &221.431&222.036&222.374&222.492&222.506&222.506 \\  \hline 
    \end{tabular}}
\end{table}


\subsection{Visualization and classification performance}

We now use synthetic data to evaluate the performance of 
proposed linear projection approaches, through visualization of projected samples using scatter plots and the classification performance. 

In this part, original high dimensional samples for both classes were generated with random means and covariance matrices in $R^6$, i.e., $d=6$. The generated samples were used to estimate the sample means and sample covariance matrices. These sample statistics are used in the proposed algorithms and LoL to find the projection matrix from $R^6$ to $R^2$. For this part, while we have also implemented gradient descent algorithms with different initialization, the results are not included since the results of the gradient descent algorithm with either Algorithm 1 or Algorithm 2 for initialization are nearly indistinguishable from that of the initialization point, indicating that both Algorithm 1 and Algorithm 2 are close to the local optimum. Additionally, gradient descent with either LoL or random initialization performs far worse than that using either Algorithm 1 or 2. 

The scatter plots for the projected samples in $R^2$ are plotted in Fig.~\ref{fig:fig:largemuvisual} for the large-$\mu$ case and Fig.~\ref{fig:smallmuvisual} for the small-$\mu$ case. Again, for Algorithm 2, class 1 samples were put through a whitening transformation hence the projected samples for class 1 have isotropic positions centered at the origin.  It is quite clear from the scatter plots that both Algorithms 1 and 2 significantly outperforms LoL. 
The difference is quite pronounced in that the subspace projected by Algorithms 1 and 2 lead to highly contrasting concentrations of the samples belonging to the two classes whereas the LoL lead to two classes that have significant overlap in the projected subspace. The reason that LoL does not perform well is two-folded. While it does include the mean difference vector in its projected subspace, it does not account for the difference in variances between the two classes along the same direction. The second reason is that the additional dimension added to the LoL subspace is the principal component of the common covariance matrix - merely having the largest variance along that direction by no means indicates its usefulness in discriminating between the two classes.
\begin{figure}
    \centering
    $\begin{array}{ccc}
    \includegraphics[width=2.4in]{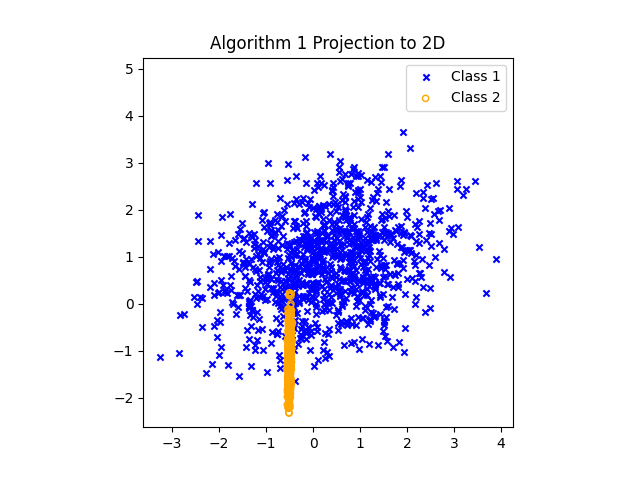}  &\hspace*{-0.5in}
    \includegraphics[width=2.4in]{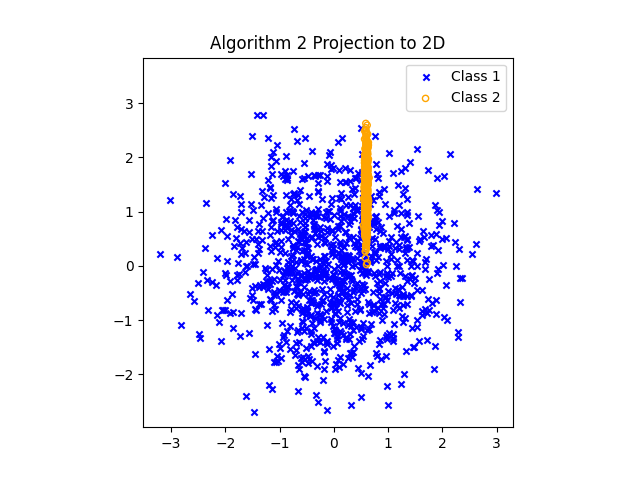} &\hspace*{-0.5in}
    \includegraphics[width=2.4in]{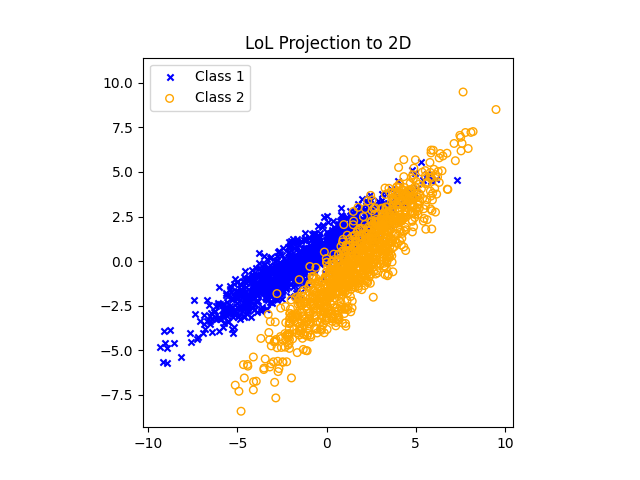}\\
   (a) & (b) &(c) 
    \end{array}$
    \caption{Scatter plots of the projected samples using the three SDR approaches for large-$\mu$: (a) Algorithm 1,  (b) Algorithm 2, (c) LoL.}
    \label{fig:fig:largemuvisual}
\end{figure}

\begin{figure}
    \centering
    $\begin{array}{ccc}
    \includegraphics[width=2.4in]{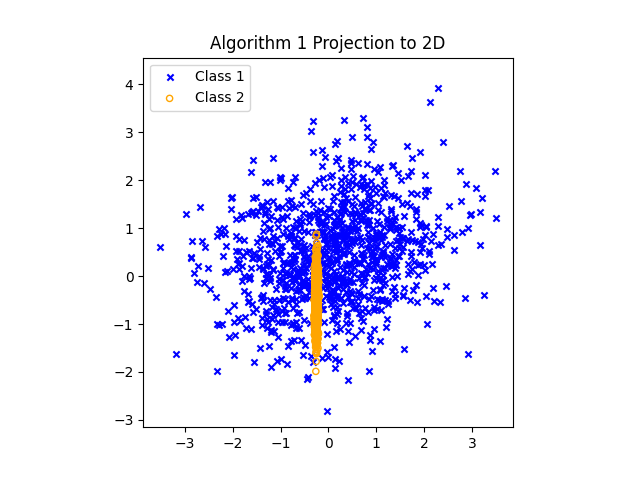} &\hspace*{-0.5in}
    \includegraphics[width=2.4in]{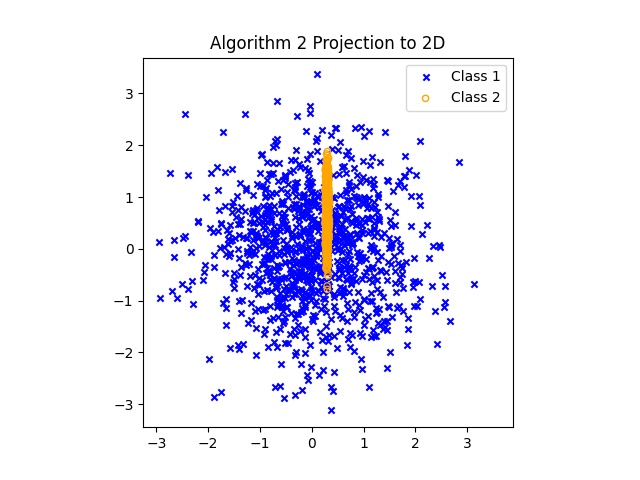} &\hspace*{-0.5in}
    \includegraphics[width=2.4in]{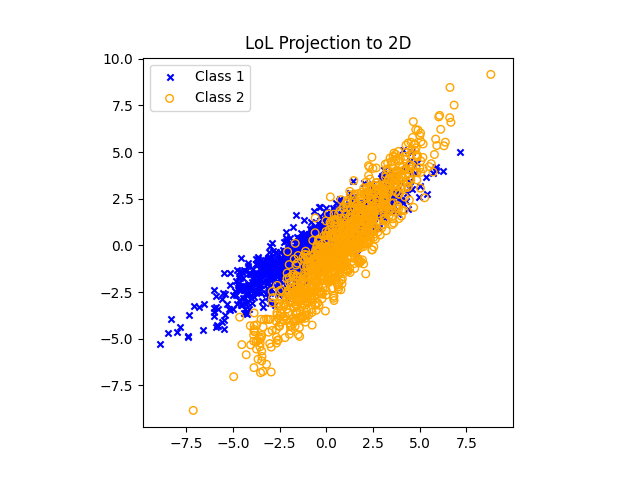}\\
    (a)&(b)&(c) 
    \end{array}$
    \caption{Scatter plots of the projected samples using the three SDR approaches for small-$\mu$: (a) Algorithm 1, (b) Algorithm 2, and (c) LoL.}
    \label{fig:smallmuvisual}
\end{figure}

Besides visualizing the scatter plots of the reduced-dimension samples in $R^2$, we also compute the recovered KLD as well as the classification performance using the reduced-dimension samples. The computation of all the KLD quantities uses the sample statistics instead of the true statistics that are used to generate the data. For classification, support vector classifier (SVC) from sklearn with radial basis function kernel is used. A total of $10,000$ samples per class are generated for training with $10$-fold cross-validation. With the trained model,  independently generated $1,000$ samples per class are used to evaluate the classification performance of the trained SVC. The results are summarized in Table~\ref{tab:kldsvc}.

\begin{table}[h!]
    \centering
        \caption{Comparison in KLD and classification accuracy of Algorithm 1, Algorithm 2, and LoL. }
    \label{tab:kldsvc}
    \begin{tabular}{|c|c|c|c|c|}
  \hline
    & \multicolumn{2}{c|}{Large-$\mu$} & \multicolumn{2}{c|}{Small-$\mu$} \\ \hline
         & Recovered KLD & Accuracy & Recovered KLD &  Accuracy \\ \hline 
Algorithm 1 &  3242.9& 96.00\%       & 2254.9 & 95.85\% \\ \hline 
Algorithm 2 &  3238.5 &  97.15\%     & 2253.9& 96.10\% \\ \hline 
LoL & 6.6 &   93.40\%    & 2.4 & 78.70\%  \\  \hline 
    \end{tabular}

\end{table}
For this example, the two proposed algorithms have comparable performance in achievable KLD and classification accuracy and both outperform LoL.
It is interesting to note that while KLD can be used as a proxy for classification performance, the magnitude of the KLD does not proportionally translate into the classification performance with finite numbers of samples.

\section{Conclusion and Future Work \label{sec:conclusion}}

This paper proposes two linear projection approaches for supervised dimension reduction using only first and second-order statistics. The proposed algorithms are derived by maximizing the Kullback-Leibler Divergence (KLD) for the general Gaussian model. They subsume existing linear projection solutions developed under some simplifying assumptions of Gaussian distributions. As a by-product, we establish that multiclass Linear Discriminant Analysis (LDA) is optimal in maximizing pairwise KLD with a common covariance matrix. The proposed algorithms are shown, both in theory and through numerical experiments, to outperform existing linear projection approaches under the most general Gaussian model.

Understanding how the proposed algorithms generalize to multiple classes under the general Gaussian model will significantly broaden their applications. We note that while the multiclass LDA \cite{Rao:48} assumes a common covariance matrix, the original Fisher's discriminant analysis (FDA)\cite{Fisher:36} was developed under the assumption that different classes have distinct mean and covariance matrices. FDA uses the heuristic criterion of maximizing the ratio of between-class variance and within-class variance along the projected direction. Thus, it provides a benchmark for any generalization of the proposed algorithms to the multiclass case.

A notable property of the proposed algorithms is that they require only the first and second-order statistics, are easy to implement, and admit ready interpretation. An important direction is to examine how the algorithms perform when dealing with real data that are typically non-Gaussian. The simplicity of the algorithms suggests that they should have reasonable generalization properties; this, however, requires extensive numerical experiments using real datasets, which will be an interesting direction for future work.

Another interesting direction is to devise supervised dimension reduction in a similar manner but for non-Gaussian populations. A particular case of interest is the Gaussian mixture with either common or distinct Gaussian components among different classes. With Gaussian mixture, the first and second order statistics are easy to compute, thus one can apply the proposed algorithms for dimension reduction. The performance provides a baseline for any proposed algorithms designed specifically for the Gaussian mixture model. While there is no closed-form expression for the KLD between Gaussian mixtures, various approximations have been developed in the literature \cite{Hershey:07} and can be used in place of the exact KLD for deriving dimension reduction solutions. 


\bibliography{ref}

\end{document}